\documentclass[11pt,a4paper]{article}

\usepackage[utf8]{inputenc}
\usepackage[T1]{fontenc}
\usepackage{lmodern}                    
\usepackage{microtype}                  
\usepackage{setspace}
\usepackage[left=1.25in,right=1.25in,top=1.2in,bottom=1.2in]{geometry}
\usepackage{amsmath,amssymb,amsthm}
\usepackage{graphicx}
\usepackage[authoryear,round]{natbib}
\usepackage[dvipsnames]{xcolor}         
\usepackage{tikz}
\usetikzlibrary{calc}
\usepackage{enumitem}                   
\usepackage{parskip}
\usepackage{booktabs}                   
\usepackage{array}                      
\usepackage[font=small,labelfont=bf,skip=0.5em]{caption}
\usepackage{titlesec}                   

\titleformat{\section}
    {\Large\bfseries\sffamily\color{black}}
    {\thesection}{1em}{}
\titleformat{\subsection}
    {\large\bfseries\sffamily\color{black}}
    {\thesubsection}{1em}{}
\titleformat{\subsubsection}
    {\normalsize\bfseries\sffamily\color{black}}
    {\thesubsubsection}{1em}{}

\titlespacing*{\section}{0pt}{1.5em}{0.8em}
\titlespacing*{\subsection}{0pt}{1.2em}{0.6em}
\titlespacing*{\subsubsection}{0pt}{1em}{0.5em}

\setlist[itemize]{leftmargin=*,itemsep=0.3em,parsep=0pt,topsep=0.5em}
\setlist[enumerate]{leftmargin=*,itemsep=0.3em,parsep=0pt,topsep=0.5em}
\setlist[description]{leftmargin=*,itemsep=0.3em,parsep=0pt,topsep=0.5em}

\usepackage[most]{tcolorbox}

\definecolor{thmblue}{RGB}{240,248,255}      
\definecolor{thmframe}{RGB}{70,130,180}      
\definecolor{defgreen}{RGB}{240,255,240}     
\definecolor{defframe}{RGB}{60,140,100}      
\definecolor{assumamber}{RGB}{255,250,240}   
\definecolor{assumframe}{RGB}{210,140,50}    
\definecolor{neutralgray}{RGB}{248,248,248}  

\newtheoremstyle{thmstyle}
  {3pt}{3pt}{\itshape}{}{\bfseries\sffamily}{.}{.5em}{}
\newtheoremstyle{defstyle}
  {3pt}{3pt}{\normalfont}{}{\bfseries\sffamily}{.}{.5em}{}
\newtheoremstyle{remstyle}
  {3pt}{3pt}{\normalfont}{}{\bfseries}{.}{.5em}{}

\theoremstyle{thmstyle}
\newtheorem{theorem}{Theorem}[section]
\tcolorboxenvironment{theorem}{
    enhanced,
    breakable,
    colback=thmblue,
    colframe=thmframe,
    boxrule=0.8pt,
    sharp corners,
    left=10pt, right=10pt, top=8pt, bottom=8pt
}

\newtheorem{proposition}[theorem]{Proposition}
\tcolorboxenvironment{proposition}{
    enhanced,
    breakable,
    colback=thmblue,
    colframe=thmframe,
    boxrule=0.8pt,
    sharp corners,
    left=10pt, right=10pt, top=8pt, bottom=8pt
}

\theoremstyle{defstyle}
\newtheorem{definition}{Definition}[section]
\tcolorboxenvironment{definition}{
    enhanced,
    breakable,
    colback=defgreen,
    colframe=defframe,
    boxrule=0.8pt,
    sharp corners,
    left=10pt, right=10pt, top=8pt, bottom=8pt
}

\newtheorem{assumption}{Assumption}[section]
\tcolorboxenvironment{assumption}{
    enhanced,
    breakable,
    colback=assumamber,
    colframe=assumframe,
    boxrule=0.8pt,
    sharp corners,
    left=10pt, right=10pt, top=8pt, bottom=8pt
}

\theoremstyle{thmstyle}
\newtheorem{lemma}{Lemma}[section]
\tcolorboxenvironment{lemma}{
    enhanced,
    breakable,
    colback=neutralgray,
    colframe=black!40,
    boxrule=0pt,
    leftrule=3pt,
    sharp corners,
    left=12pt, right=10pt, top=5pt, bottom=5pt
}

\newtheorem{corollary}{Corollary}[section]
\tcolorboxenvironment{corollary}{
    enhanced,
    breakable,
    colback=neutralgray,
    colframe=black!40,
    boxrule=0pt,
    leftrule=3pt,
    sharp corners,
    left=12pt, right=10pt, top=5pt, bottom=5pt
}

\theoremstyle{defstyle}

\theoremstyle{remstyle}

\usepackage[
    colorlinks=true,
    linkcolor=RoyalBlue,
    citecolor=RoyalBlue,
    urlcolor=RoyalBlue,
    pdfborder={0 0 0},
    bookmarks=true,
    bookmarksopen=true,
    pdfstartview=FitH
]{hyperref}

\title{%
    \vspace{-1em}%
    {\LARGE\bfseries The Directions of Technical Change}%
    \thanks{We thank Gergely Szertics and seminar audience at CEU for helpful comments. This project was funded by the European Research Council (ERC Advanced Grant agreement number 101097789). The views expressed in this research are those of the authors and do not necessarily reflect the official view of the European Union or the European Research Council. Project no.\ 144193 has been implemented with the support provided by the Ministry of Culture and Innovation of Hungary from the National Research, Development and Innovation Fund, financed under the KKP\_22 funding scheme.}%
    \vspace{-0.5em}%
}

\author{%
    \begin{tabular}{c@{\hspace{2em}}c@{\hspace{2em}}c}
    \textbf{Miklós Koren} & \textbf{Zsófia L. Bárány} & \textbf{Ulrich Wohak}\\
    \small\textit{CEU, KRTK, CEPR, CESifo} & 
    \small\textit{CEU, CEPR, HSOE, ROA} & 
    \small\textit{CEU}\\
    \small miklos@koren.work & \small baranyzs@ceu.edu & \small wohaku@ceu.edu
    \end{tabular}
}

\date{\today}

\renewenvironment{abstract}{
    \begin{center}
    \begin{minipage}{0.9\textwidth}
    \small
    \setlength{\parskip}{0.5em}
}{
    \end{minipage}
    \end{center}
    \vspace{1em}
}

\begin{document}

\maketitle

\begin{abstract}
Generative AI is directional: it performs well in some task directions and poorly in others. Knowledge work is directional and endogenous as well: workers can satisfy the same job requirements with different mixes of tasks. We develop a high-dimensional model of AI adoption in which a worker uses a tool when it raises their output. Both the worker and the AI tool can perform a variety of tasks, which we model as convex production possibility sets. Because the tool requires supervision from the worker's own time and attention budget, adoption is a team-production decision, similar to hiring a coworker. The key sufficient statistics are the worker's pre-AI shadow prices: these equal the output gain from a small relaxation in each task direction, and they generally differ from the worker's observed activity mix. As AI capability improves, the set of adopted directions expands in a cone centered on these autarky prices. Near the entry threshold, small capability improvements generate large extensive-margin expansions in adoption. The model also delivers a structured intensive margin: between the entry and all-in thresholds, optimal use is partial. We parametrize the model in a simple but flexible way that nests most existing task-based models of technical change. 
\end{abstract}

\section{Introduction}
\label{sec:introduction}

Generative artificial intelligence (GenAI) based on large language models \citep{Vaswani2017} is not a single ``machine.'' It is a directional technology: it excels at some task combinations and performs poorly at others. Like other general-purpose technologies \citep{Eloundou2023}, it may reshape broad swaths of work, but two features make adoption and gains unusually hard to anticipate ex ante. First, adoption in knowledge work is often worker-driven: the interface is general-purpose and the main inputs are time and attention rather than specialized capital, so usage can spread inside organizations without explicit firm-level decisions.\footnote{On worker-driven adoption ("shadow IT"), see \citet{Klotz2019,Rakovic2020}; for rapid early diffusion, see \citet{UBS2023}.} Second, capabilities improve quickly but unevenly across tasks.\footnote{For software engineering benchmarks, see SWE-bench \citep{Jimenez2024}. The unevenness is described as a "jagged technological frontier" by \citet{DellAcqua2023} and quantified more broadly by \citet{Hendrycks2025}.} Together, worker-driven adoption and jagged capability mean that asking ``what is the effect of AI on labor?'' is ill-posed without specifying which tools help with which tasks---and for which workers.

Adoption is especially hard to characterize in knowledge work because workers can also change how they do the job. Job requirements may be stringent in terms of expected output, but different task mixes can deliver the same valued output. A software engineer can allocate time between writing new code, debugging, documentation, and code review; a management consultant between data analysis, client meetings, and slide preparation; a lawyer between legal research, drafting, and client communication. Reallocating effort across tasks is costly, however, and using AI consumes the same scarce time and attention that workers devote to their own work.

To understand AI adoption incentives in knowledge work, we develop a model in which what a worker can do in a unit of time is a convex set of feasible task combinations---a production possibility set (PPS).\footnote{This move is standard in international trade, where convex feasible sets and their supporting prices summarize production opportunities \citep{DixitNorman1980}.} A worker is employed in a job, which determines the value of any bundle of tasks \citep{Autor2013,Autor2025,LisePostelVinay2020}.\footnote{For evidence that workers have multidimensional skill endowments that affect occupational mobility, see \citet{GathmannSchoenberg2010}.} They choose a task mix to maximize output subject to a time budget. Skills determine which task combinations are feasible; job requirements determine which combinations generate high output.

We then treat AI as another production possibility set: a potential coworker with its own feasibility profile, and we model it as a directional technology that produces tasks in a fixed proportion. AI capability determines how much it can produce per unit of supervision time, in that fixed proportion. Using AI requires supervision, which uses time the worker could otherwise spend on their own tasks. The combined feasible set is therefore a weighted sum of what the worker and the tool can each produce, where the weights reflect how the worker allocates time between doing and supervising. Adopting AI is a collaboration decision rather than a simple substitution decision: the worker adopts when the expanded set is valuable enough to justify supervision.\footnote{The same logic extends to tools that can flexibly combine a small number of task directions in a high-dimensional space; see \citet{KaushikChaudhari2025}.}

The worker adopts an AI tool if using it raises their output. At their pre-AI optimum, each task direction has a shadow (autarky) price: the output gain from being able to do slightly more of that task, holding everything else fixed. These shadow prices summarize what is scarce in the worker--job match and map the AI tool's direction---a vector in task space---into a single scalar value. The worker adopts the AI tool when its direction is valuable at autarky prices. Autarky prices need not match the worker's activity vector: they reflect what they would like to shift toward at the margin, not what they currently do.

As AI capability improves, the set of adopted directions expands in a cone centered on the worker's autarky prices; near the entry threshold, small capability gains translate into large extensive-margin expansions in adoption. Adoption also has a structured intensive margin: a tool can be worth using but not worth using all the time, generating a region of stable partial adoption between an entry threshold and an all-in threshold. Because autarky prices depend jointly on worker skills and job requirements, adoption is match-specific: neither ``which workers adopt'' nor ``which jobs are affected'' has a well-defined answer without specifying the pairing. Finally, greater flexibility---in how the worker can allocate their time or in how different tasks are valued in output---widens the set of tool directions that are worth adopting, so adoption expands faster as capabilities improve.

We keep the analysis intentionally neoclassical to make the mechanism transparent. We study a worker doing a given job and abstract from governance and principal-agent frictions, so the worker maximizes output. The main objects and results---autarky prices, the adoption condition, the adoption cone, and the intensive margin---do not depend on a particular parametric form; they follow from convexity and marginal valuation.\footnote{Formally, we characterize task choice using a trade-style duality argument that delivers supporting prices for a convex feasible set \citep{DixitNorman1980}.}

For measurement and interpretation, we also provide a parametric illustration using standard constant-elasticity functional forms. We assume a constant-elasticity-of-substitution structure for how tasks combine in output, and a constant-elasticity-of-transformation (CET) structure \citep{PowellGruen1968} for how a worker's fixed time/attention budget can be used for different task bundles. This specialization yields closed-form expressions for autarky prices as functions of observable skill and requirement vectors and clarifies when skills versus job requirements dominate adoption predictions. In the limiting cases, rigid specialists have shadow prices pinned down by job requirements, while perfectly flexible workers have shadow prices pinned down by skills. In between, which we believe is the relevant case for knowledge work, both matter.

We hold the worker--job match fixed and analyze adoption of a given tool as an individual optimization problem. This isolates a building block that can be embedded in richer settings. In equilibrium, the availability of AI can reshape sorting and the distribution of jobs; multiple tools raise a team-composition problem; firms may govern usage and provide complementary investments \citep{Aldasoro2026}; and tool directions and capabilities may themselves respond to training data and market incentives.

The remainder of the paper proceeds as follows. Section \ref{sec:model} develops the high-dimensional model of task choice and directional technology adoption and derives the adoption condition, the adoption cone, and the intensive margin. Section \ref{sec:parametric} provides the parametric specification and its closed-form expressions. Section \ref{sec:discussion} relates the framework to prior task-based models and discusses interpretation and limitations. The Appendix collects proofs and additional derivations.

\section{A Model of Digital Technology Adoption}
\label{sec:model}

We study a worker doing a given job. We abstract from labor-market, governance, and principal-agent frictions, so the worker maximizes output. This corresponds to a richer environment where tasks are observable and contractible within the firm. We take the worker--job match as fixed and reserve the study of sorting for future work.

The worker chooses how much to do of each of \(N\) tasks. Denote the task quantities by the vector \(x\in\mathbb R_+^N\). The set of task bundles the worker can complete in a workday (or other unit of the scarce resource) is given by the set \(X_A\), characterized by its \emph{resource use function} \(g\), \[
X_A = \{x: g(x)\le B\}.
\] Here \(B\) is the amount of the scarce resource (``budget'') the worker has available. We interpret the budget as time or attention, not money. This is a key feature of digital technology adoption: the binding constraint is often the worker's cognitive capacity and hours, not financial cost. The set \(X_A\) is called the \emph{Production Possibilities Set} (PPS).

\begin{assumption}[Resource use function]
\label{ass:resource-use}
The \emph{resource use function} $g:\mathbb R^N_+\to\mathbb R_+$ is (i) convex, (ii) homogeneous of degree one, (iii) increasing in each component, and (iv) $g(0)=0$. These properties make the feasible set $X_A = \{x:g(x)\le B\}$ convex, scalable with the budget, and closed under free disposal. We do not require $g$ to be differentiable or strictly convex.
\end{assumption}

\begin{assumption}[Production function]
\label{ass:production}
The production function $F:\mathbb R^N_+\to\mathbb R_+$ is (i) homogeneous of degree one (constant returns to scale), (ii) strictly quasi-concave (upper contour sets $\{x:F(x)\ge c\}$ are strictly convex), and (iii) twice continuously differentiable on $\mathbb R^N_{++}$ with $\nabla F(x)\neq 0$ for all $x\in\mathbb R^N_{++}$. Strict quasi-concavity ensures that the optimum, when interior, is unique and the supporting hyperplane is pinned down by $\nabla F$.
\end{assumption}

\begin{definition}[Task allocation problem]
\label{def:task-allocation}
Given a resource use function $g$, a production function $F$, and a budget $B>0$, the worker's \emph{task allocation problem} is
$$
\max_{x\in\mathbb R^N_+} F(x)\qquad \text{s.t.}\qquad g(x)\le B.
$$
The worker chooses the task bundle that maximizes output subject to the constraint that total resource use does not exceed the budget.
\end{definition}
\begin{proposition}[Existence]
\label{prop:existence}
Under Assumptions \ref{ass:resource-use} and \ref{ass:production}, a solution to the task allocation problem (Definition \ref{def:task-allocation}) exists.
\end{proposition}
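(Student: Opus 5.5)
The plan is to apply the Weierstrass extreme value theorem. I will show that the feasible set $X_A=\{x\in\mathbb R^N_+: g(x)\le B\}$ is nonempty and compact, and that $F$ is (upper semi)continuous on it. Nonemptiness is immediate: $g(0)=0\le B$ by Assumption \ref{ass:resource-use}(iv), so $0\in X_A$.

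\textbf{Boundedness.} I read "increasing in each component" in Assumption \ref{ass:resource-use}(iii) as strictly increasing, so $g(e_i)>g(0)=0$ for every unit vector $e_i$. For $x\in\mathbb R^N_+$, monotonicity and homogeneity give
\[
g(x)\ge g(x_ie_i)=x_i\,g(e_i),
\]
so every $x\in X_A$ satisfies $x_i\le B/g(e_i)$. Hence $X_A$ lies in a box.

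\textbf{Closedness.} This needs lower semicontinuity of $g$. Convexity alone only gives continuity on the open orthant $\mathbb R^N_{++}$, not up to the boundary, so I will use homogeneity and monotonicity instead. Take $x_k\to x$ in $\mathbb R^N_+$ and fix $\varepsilon\in(0,1)$. Then $(1-\varepsilon)x\le x_k$ componentwise for all large $k$:
\begin{itemize}
\item coordinates with $x_i=0$ hold trivially, since $x_{k,i}\ge 0$;
\item coordinates with $x_i>0$ hold eventually, by convergence.
\end{itemize}
Therefore $g(x_k)\ge(1-\varepsilon)g(x)$ for large $k$, so $\liminf_k g(x_k)\ge(1-\varepsilon)g(x)$. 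Letting $\varepsilon\to 0$ gives lower semicontinuity. Thus $X_A$ is closed and bounded, hence compact.

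\textbf{Objective.} On $\mathbb R^N_{++}$, $F$ is $C^2$ and hence continuous. Since $F$ is nonnegative, homogeneous of degree one and quasi-concave, it is concave; this is the standard result that a positive, linearly homogeneous, quasi-concave function is concave. The main obstacle is the boundary of $\mathbb R^N_+$. Assumption \ref{ass:production} does not explicitly require $F$ to be continuous there, and a concave function on a polyhedral set is automatically lower but not necessarily upper semicontinuous at the boundary.

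My first attempt is to note that $F$ is implicitly continuous on $\mathbb R^N_+$, as all the parametric (CES) examples are. Then Weierstrass applies directly and yields a maximizer of $F$ on the compact set $X_A$.

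To avoid relying on that, I would instead work with the upper semicontinuous hull $\bar F$ of $F$. By concavity, $\bar F$ agrees with $F$ on $\mathbb R^N_{++}$ and has the same supremum over $X_A$. Upper semicontinuity on a compact set guarantees that $\bar F$ attains its maximum at some $x^*$. Since $F$ is continuous on the interior, the only remaining case is a boundary $x^*$ at which $\bar F(x^*)>F(x^*)$. If $F$ has no such upward jumps at the boundary, then $x^*$ also maximizes $F$ and the proposition follows.
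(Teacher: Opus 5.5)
Your route is the paper's own: Weierstrass on the compact set $X_A$. On the feasible set you are more careful than the paper, whose proof simply asserts that $g$ is continuous and that $g(x)\to\infty$. Your lower-semicontinuity argument via $(1-\varepsilon)x\le x_k$ is exactly what closedness needs. The strict reading of monotonicity is also necessary: $g(x)=x_1$ on $\mathbb R^2_+$ is convex, homogeneous of degree one, nondecreasing in each component and vanishes at $0$, yet $X_A$ is unbounded.

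The genuine gap is the last step, which you located correctly but did not close. Assumption \ref{ass:production} gives continuity of $F$ only on $\mathbb R^N_{++}$. The paper's ``$C^2$ implies continuity'' tacitly extends this to $\mathbb R^N_+$, which is your first attempt. Your upper-semicontinuous-hull detour ends with the condition $\bar F(x^*)=F(x^*)$, which is precisely what has to be shown, so it adds nothing to the first attempt. Quasi-concavity does prevent $F$ at a boundary point from exceeding its limits from the interior. It does not prevent $F$ from falling below them, and that is the case that matters.

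In fact no argument can close the gap from the stated assumptions. Take $N=2$, $g(x)=x_1+3x_2$ and $B=1$. Let $F_0(x)=x_1+x_2+x_1x_2/(x_1+x_2)$ with $F_0(0)=0$; it is continuous, concave and strictly quasi-concave on $\mathbb R^2_+$. Set $F=F_0$ on $\{x_2>0\}$ and $F(x_1,0)=\tfrac12 x_1$.

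\begin{itemize}
\item $F$ is homogeneous of degree one, $C^\infty$ on $\mathbb R^2_{++}$, and has both partial derivatives at least $1$ there.
\item $F$ is strictly quasi-concave in the usual sense $F(\mu x+(1-\mu)y)>\min\{F(x),F(y)\}$. Every open segment either lies in the $x_1$-axis, where $F$ is linear and increasing, or lies in $\{x_2>0\}$, where $F=F_0$ and $F_0\ge F$ everywhere.
\item Since $\nabla F_0(1,0)=(1,2)$, concavity gives $F_0(x)\le x_1+2x_2\le 1-x_2$ on $X_A$. Hence $F<1$ on all of $X_A$; in particular $F(1,0)=\tfrac12$.
\item Meanwhile $F(1-3\varepsilon,\varepsilon)\to 1$, so no maximizer exists.
\end{itemize}

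So the proposition needs an extra hypothesis: $F$ upper semicontinuous on $\mathbb R^N_+$, equivalently closed upper contour sets. If you read closedness into ``strictly convex upper contour sets'' in Assumption \ref{ass:production}(ii), the upper-semicontinuous version of Weierstrass finishes the proof at once, with no hull construction. This reading covers the CES specification.

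A side remark: nonnegativity, homogeneity and quasi-concavity do not by themselves imply concavity; $\max\{x_2-x_1,0\}$ is a counterexample. You need $F>0$ on $\mathbb R^N_{++}$, which does follow here from $\nabla F\neq 0$, since an interior zero would be an interior minimum. Your argument never actually uses concavity, though.
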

\begin{proof}
The feasible set $X_A$ is closed (preimage of $(-\infty,B]$ under continuous $g$) and bounded (homogeneity and monotonicity of $g$ imply $g(x)\to\infty$ as $\|x\|\to\infty$). The objective $F$ is continuous ($C^2$ implies continuity). Existence follows from Weierstrass's theorem.
\end{proof}

We denote any solution by \(x_A\).

\begin{definition}[Shadow prices]
\label{def:shadow-prices}
A \emph{shadow price} vector $p\in\mathbb R^N_+$ is the outward normal to a hyperplane supporting the Production Possibility Set at the optimum. We normalize all price vectors to have unit Euclidean length, $\|p\|_2 = 1$.
\end{definition}

\begin{proposition}[Autarky price characterization]
\label{prop:autarky-prices}
A task bundle $x_A$ solves the task allocation problem if and only if there exists an autarky price vector $p_A$ such that (i) $x_A$ maximizes revenue $R = p_A' x$ subject to $g(x)\le B$, and (ii) $x_A$ maximizes output $F(x)$ subject to $p_A' x \le R$. Moreover, if $x_A\in\mathbb R^N_{++}$, then $p_A$ is unique.
\end{proposition}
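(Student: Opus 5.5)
The plan is a separating-hyperplane argument in the style of \citet{DixitNorman1980}. I separate the convex feasible set $X_A$ from the strict upper contour set of $F$ at the optimum, then read off (i) and (ii) from the two sides of the hyperplane. The converse direction is immediate, and uniqueness comes from the first-order condition of the output problem in (ii).

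\emph{Sufficiency.} Suppose some $p_A$ satisfies (i) and (ii), and let $R=p_A'x_A$. Take any feasible $x$ with $g(x)\le B$. By (i), $p_A'x\le R$, so $x$ lies in the budget set of (ii). By (ii), $F(x)\le F(x_A)$. Hence $x_A$ solves the task allocation problem.

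\emph{Necessity.} Let $x_A$ solve the problem, which exists by Proposition \ref{prop:existence}, and set $U=\{x\in\mathbb R^N_+: F(x)>F(x_A)\}$.
\begin{itemize}
\item \emph{Convexity and disjointness.} $U$ is convex by Assumption \ref{ass:production}(ii), and $X_A$ is convex by Assumption \ref{ass:resource-use}. The two sets are disjoint because $x_A$ is optimal.
\item \emph{Nonemptiness of $U$.} By homogeneity, $(1+\varepsilon)x_A\in U$ whenever $F(x_A)>0$. I would treat $F(x_A)=0$ separately as a degenerate case, or rule it out by noting that $X_A$ contains a strictly positive point and $F>0$ somewhere on it.
\item \emph{Separation.} The separating hyperplane theorem gives $p\neq 0$ with $p'x\le p'y$ for all $x\in X_A$ and $y\in U$. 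Since $x_A\in X_A$ lies in the closure of $U$ (take the points $(1+\varepsilon)x_A$), $\sup_{X_A}p'x=p'x_A$. This is (i), with $R=p'x_A$.
\item \emph{Normalization and $R>0$.} Normalize $\|p\|_2=1$. Once $p\ge 0$ is known, $R>0$ follows because $X_A$ contains small positive multiples of every unit vector (by homogeneity and $B>0$).
\end{itemize}

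\emph{Property (ii).} Suppose some $x$ has $p'x\le R$ and $F(x)>F(x_A)$. Then $x\in U$, so $p'x\ge R$, and therefore $p'x=R$. By continuity and homogeneity of $F$, the point $(1-\varepsilon)x$ is still in $U$ for small $\varepsilon$. But $p'(1-\varepsilon)x<R$, which contradicts separation. So $x_A$ maximizes $F$ on $\{x: p'x\le R\}$.

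\emph{The step I expect to be hardest: showing $p\ge 0$.} I would use free disposal of $X_A$. If $p_i<0$ and $x_{A,i}>0$, lowering $x_i$ keeps the point feasible and raises $p'x$, contradicting (i). So negative components can only occur where $x_{A,i}=0$. In those coordinates, I would replace $p_i$ by $0$ and check that (i) and (ii) survive.
\begin{itemize}
\item For (i): for $p_i<0$, lowering $x_i$ to $0$ preserves feasibility by free disposal (Assumption \ref{ass:resource-use}(iii)). It also raises $p'x$ and leaves the revised $p^{\text{new}\prime}x$ unchanged. Hence $p^{\text{new}\prime}x\le p'x_A=p^{\text{new}\prime}x_A$, and (i) survives. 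It also shows $p^{\text{new}}\neq 0$: if it were $0$, then $p'x\le R=0$ on all of $X_A$, which is false at small positive multiples of any unit vector $e_j$ with $p_j>0$.
\item For (ii), the revised budget set is larger, so survival is not automatic. The natural extra ingredient is monotonicity of $F$, which is implicit in the economic setting though not listed in Assumption \ref{ass:production}. Without it, I would fall back on a KKT/subgradient argument. In that argument, $p$ is taken in the normal cone of $X_A$ at $x_A$, which is generated by $\partial g(x_A)\subset\mathbb R^N_+$ because $g$ is increasing.
\end{itemize}

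\emph{Uniqueness.} Suppose $x_A\in\mathbb R^N_{++}$. Then $x_A$ is an interior maximizer of the differentiable function $F$ on the half-space in (ii), so $\nabla F(x_A)=\lambda p_A$ for some $\lambda\ge 0$. Since $\nabla F(x_A)\neq0$, we have $\lambda>0$. Any admissible $p_A$ is therefore parallel to $\nabla F(x_A)$, and unit length pins it down.
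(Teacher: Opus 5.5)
Your route is the paper's route: separate $X_A$ from the strict upper contour set $U$ at $x_A$, read (i) and (ii) off the two sides, and pin down $p_A$ by $\nabla F(x_A)$ at an interior optimum. You carry it out in more detail, and the paper's sketch never addresses the sign of $p_A$, which Definition~\ref{def:shadow-prices} requires. However, the step you flag as hardest is misdiagnosed and left open, and your argument that the corrected price vector is nonzero is circular.

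\emph{The sign correction.} You replace each negative $p_i$ by zero; such $p_i$ can only occur where $x_{A,i}=0$. So $p^{\text{new}}\ge p$, and $R=p^{\text{new}\prime}x_A=p'x_A$ is unchanged. For $x\in\mathbb R^N_+$ this gives $p'x\le p^{\text{new}\prime}x$, hence
$\{x\ge 0: p^{\text{new}\prime}x\le R\}\subseteq\{x\ge 0: p'x\le R\}$.
The budget set in (ii) \emph{shrinks}, so (ii) for $p^{\text{new}}$ is inherited directly from (ii) for $p$. Equivalently, $p^{\text{new}\prime}y\ge p'y\ge R$ for all $y\in U$, so $p^{\text{new}}$ is itself a separating normal. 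No monotonicity of $F$ is needed. Drop the KKT fallback as well, since as stated it would not work: where $x_{A,i}=0$, the normal cone of $X_A$ also contains $-e_i$ from the constraint $x\ge 0$, so it is not generated by $\partial g(x_A)$ alone.

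\emph{Nontriviality.} If $p^{\text{new}}=0$, then $p\le 0$, so $R=0$ and $p'x\le 0$ holds on all of $X_A$ with no contradiction. There is no $j$ with $p_j>0$ to appeal to, so the contradiction must come from the $U$ side.
\begin{enumerate}
\item Since $F\ge 0$ and $\nabla F\neq 0$ on $\mathbb R^N_{++}$, $F$ cannot vanish at an interior point, because such a point would be an interior minimum.
\item Hence $c\mathbf 1\in U$ for large $c$.
\item If $p\le 0$ and $p\ne 0$, then $0\le R\le p'(c\mathbf 1)<0$, a contradiction.
\end{enumerate}
This one fact also supplies $R>0$. You already need $R>0$ in your proof of (ii) for the original $p$, before $p\ge 0$ is established. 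Via $\varepsilon\mathbf 1\in X_A$, it also disposes of your degenerate case $F(x_A)=0$.

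With these two repairs the proof is complete. Your uniqueness argument is fine, and it needs only (ii) and $\nabla F(x_A)\ne 0$, not strict quasi-concavity.
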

\begin{proof}
The first part follows from a standard tangency argument: at the optimum, a supporting hyperplane separates the feasible set from higher-output bundles, and its normal vector can be interpreted as implicit (autarky) task prices. When $x_A\in\mathbb R^N_{++}$, strict quasi-concavity of $F$ (Assumption \ref{ass:production}) implies that $\nabla F(x_A)$ uniquely pins down the supporting hyperplane, making $p_A$ unique even if $g$ is not differentiable at $x_A$.
\end{proof}

Figure \ref{fig:autarky-prices} shows optimal task allocation with smooth \(g\) and \(F\) functions. The worker chooses the task bundle \(x_A\) that maximizes output \(F(x)\) subject to the Production Possibility Set (PPS). The autarky price vector \(p_A\) is normal to the supporting hyperplane at the optimum. The boundary of the PPS is called the Production Possibility Frontier (PPF).

\begin{figure}[htbp]
\centering
\begin{tikzpicture}[scale=2.2]
\draw[->,thick,gray] (-0.1,0) -- (2.6,0) node[right] {\small Task 1};
\draw[->,thick,gray] (0,-0.1) -- (0,2.4) node[above] {\small Task 2};
\draw[thick,blue,domain=0:90,samples=80] plot ({2*cos(\x)},{1.6*sin(\x)});
\node[blue,left] at (0.3,1.45) {\small PPS};
\draw[thick,black!60,domain=0.75:2.4,samples=50] plot (\x,{1.590/\x});
\node[black!60,right] at (2.3,0.75) {\small $F(x)=\bar F$};
\fill[black] (1.491,1.067) circle (1.5pt);
\node[above right] at (1.491,1.067) {\small $x_A$};
\draw[dashed,black!50] (0,2.133) -- (2.5,0.345);
\node[black!50,below right] at (0.15,2.0) {\small $p_A' x = R$};
\draw[->,very thick,black] (1.491,1.067) -- (1.824,1.439);
\node[above] at (1.85,1.5) {\small $p_A$};
\end{tikzpicture}
\caption{Task allocation in autarky}
\label{fig:autarky-prices}
\end{figure}
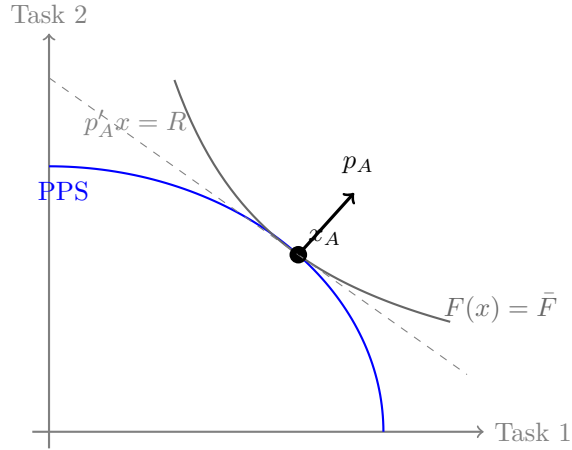

For notational convenience, we introduce the \emph{unit revenue} function as \[
\varrho(p) = \frac1B\max_{x:g(x)\le B} p'x = \max_{x:g(x)\le 1} p'x
\] that shows, for every possible set of prices, the maximum ``revenue'' the worker can achieve using only human activities for one unit of time (or resource).\footnote{Geometrically, this function represents a directional diameter of the PPS \(X_A\). While the extent of \(X_A\) is limited in each direction \(x\) by \(g(x)\le B\), this function converts the directional extent into an inner product.} Clearly, \(\varrho(p)\) is homogeneous of degree one and convex.

\subsection{Technology expands the production possibilities set}\label{technology-expands-the-production-possibilities-set}

We model AI as a directional technology capable of executing tasks in fixed proportions. Concretely, AI produces tasks in a fixed mix \(t\), scaled up by an intensity \(\lambda \ge 0\). We normalize \(\|t\|=1\) so that \(t\) only represents a \emph{direction}, not a length. Elements are \(t_i \ge 0\) but may be zero in any particular direction. For example, AI may not be able to help in manual tasks.

We first discuss the properties and the effects of a single technology. We generalize the technology adoption problem to multiple directional technologies in Section \ref{sec:multi-technology}.

To use AI, the worker has to use some of the same resources that constrain them in their own work. For example, instructing AI agents and reviewing their work takes time, and this limits the amount of AI that can be used in a workday.\footnote{Time as a binding constraint is central to the economics of digital goods. \citet{GoodmanList2026} leverage Becker's time allocation theory to study consumer demand for time-intensive platforms like Facebook and Instagram, finding that time shares and time costs significantly influence substitution patterns across online activities.} We assume AI is competing for the same resource budget \(B\), so if the worker spends a fraction \(\lambda\) of their budget supervising AI, only \((1-\lambda)B\) is left for doing other tasks.\footnote{For simplicity, we assume there is a linear transformation between resources spent on AI supervision and resources spent on human work. This is eminently true if the resource is time: one hour lost on an activity is one hour gained on another. A more general formulation could model the production possibilities set as constrained by several resources, such as time, attention, mental energy, and AI may not use them in the same proportion as human tasks. We conjecture our results would survive as long as AI is sufficiently substitutable to human tasks in the resource constraint.} We denote the \emph{capability} of technology with \(\chi\ge 0\), which acts as a productivity shifter. A unit of technology \(\|t\|\) costs \(1/\chi\) units of the resource \(B\). Continuing with the AI example, more capable AI tools can get more done per hour of human supervision.

\begin{definition}[Technology]
\label{def:technology}
A \emph{technology} is a pair $(t, \chi)$. The vector $t \in \mathbb{R}^N_+$ with $\|t\|_2 = 1$ is the \emph{direction} (which tasks the technology performs, and in what proportions). The scalar $\chi \ge 0$ is the \emph{capability} (how many units of tasks the technology produces per unit of resource spent supervising it).
\end{definition}

Allocating \(r\) units of resource to technology yields \(r \chi t\) units of tasks. When the worker allocates a fraction \(\lambda \in [0,1]\) of their budget \(B\) to technology, they obtain \(\lambda B \chi t\) from technology and have \((1-\lambda)B\) resources left for human work.

\begin{definition}[Technology Adoption Problem]
\label{def:tech-adoption}
Given technology $(t, \chi)$ and budget $B$, the worker chooses technology intensity $\lambda \in [0,1]$ and human task allocation $x_H$ to maximize output $F(x_H + \lambda B \chi t)$ subject to $g(x_H) \le (1-\lambda)B$.
\end{definition}

The Production Possibility Set with technology is the set of all task bundles the worker can achieve by combining human work with (possibly partial) use of technology, allowing free disposal of any tasks: \[
X_T = \left\{ z \in \mathbb R^N_+ : z \le x_H + \lambda B \chi t, \; g(x_H) \le (1-\lambda)B, \; \lambda \in [0,1] \right\}.
\] The inequality \(z \le x_H + \lambda B\chi t\) (componentwise) reflects free disposal: the worker can always discard output in any task. The autarky PPS \(X_A\) already satisfies free disposal (Assumption \ref{ass:resource-use}), but the technology contribution \(\lambda B\chi t\) is a single point in task space. Without the free disposal clause, \(X_T\) would not be downward-closed in the technology direction.

As can already be seen from the definition, technology expands the production possibilities set into a convex hull, as we state formally below. Figure \ref{fig:technology-tangent-segment} illustrates the geometry. The worker allocates a fraction \(\lambda\) of the budget to using technology (vector \(y\)) and a fraction \(1-\lambda\) to their own work. The optimal allocation lies on the tangent line from \(y\) to the Production Possibility Set. Point \(a\) represents own work, and the line segment from \(a\) to \(b\) represents the contribution of technology.

\begin{proposition}[Technology expands PPS to convex hull]
\label{prop:convex-hull}
The Production Possibility Set with technology adoption equals the convex hull of the autarky PPS and the technology point, closed under free disposal:
$$
X_T = \text{conv}(X_A \cup \{B\chi t\}) - \mathbb R^N_+,
$$
where the subtraction denotes the Minkowski difference (i.e., for any $z$ in the convex hull, all $z' \le z$ with $z'\ge 0$ are also in $X_T$). Since $X_A$ is already downward-closed, this simplifies to $X_T = \text{conv}(X_A \cup \{B\chi t\})$ whenever the technology point $B\chi t$ is dominated componentwise by some element of $X_A$, and otherwise augments the convex hull by the free-disposal region below the technology point.
\end{proposition}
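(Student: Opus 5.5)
The proof is a double inclusion. Throughout, write $y=B\chi t$ for the technology point and $C=\mathrm{conv}(X_A\cup\{y\})$. The target is $X_T=(C-\mathbb R^N_+)\cap\mathbb R^N_+$, which is how the statement's Minkowski difference should be read.

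The key observation is that, by homogeneity of degree one of $g$ (Assumption \ref{ass:resource-use}), $g(x_H)\le(1-\lambda)B$ holds exactly when $x_H=(1-\lambda)a$ for some $a\in X_A$. For $\lambda<1$ take $a=x_H/(1-\lambda)$, since $g(a)=g(x_H)/(1-\lambda)\le B$. For $\lambda=1$, nonnegativity and monotonicity of $g$ with $g(0)=0$ force $g(x_H)\le 0$. I need to be careful here: "increasing in each component" should be read so that $g(x)>0$ whenever $x\ne0$, and I would state that reading explicitly. Then $x_H=0=0\cdot a$ for any $a\in X_A$, which is nonempty because $0\in X_A$.

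\textbf{Inclusion $X_T\subseteq(C-\mathbb R^N_+)\cap\mathbb R^N_+$.} Let $z\in X_T$. Then $z\le x_H+\lambda y=(1-\lambda)a+\lambda y$, and the right-hand side is a convex combination of $a\in X_A$ and $y$, so it lies in $C$. Hence $z\in C-\mathbb R^N_+$, and $z\ge0$ holds by definition of $X_T$.

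\textbf{Reverse inclusion.} This direction is the main obstacle, because $C$ allows convex combinations of many points, whereas $X_T$ uses a single $a$ and a single weight $\lambda$. I handle it with convexity of $X_A$. Any $c\in C$ can be written $c=\sum_i\mu_i a_i+\mu_0 y$ with $a_i\in X_A$, weights $\mu_i,\mu_0\ge0$, and $\sum_i\mu_i+\mu_0=1$. Set $\lambda=\mu_0$.
\begin{itemize}
\item If $\lambda<1$, let $a=\sum_i\mu_i a_i/(1-\lambda)$. This is a convex combination of points of $X_A$, so $a\in X_A$ by convexity, and $c=(1-\lambda)a+\lambda y$.
\item If $\lambda=1$, then $c=y$; take $x_H=0$.
\end{itemize}
In either case set $x_H=(1-\lambda)a$, so $g(x_H)\le(1-\lambda)B$. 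Then every $z\ge0$ with $z\le c$ satisfies the definition of $X_T$.

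\textbf{Simplification clause.} Suppose $y\le\bar x$ componentwise for some $\bar x\in X_A$. First, $y\in X_A$ because $X_A$ is downward-closed: monotonicity of $g$ gives $g(y)\le g(\bar x)\le B$. Hence $C=X_A$ by convexity. Moreover, $C$ is then downward-closed within $\mathbb R^N_+$, so $X_T=C$. Otherwise $y\notin X_A$, and the free-disposal part $\{z\ge0: z\le c \text{ for some } c\in C\}$ can strictly add points below segments toward $y$ that lie outside $C$, for instance when $y$ lies on an axis. This is the augmentation the statement describes.
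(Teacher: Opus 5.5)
Your proof is correct and uses the same double inclusion via homogeneity of $g$ as the paper, done more carefully. The paper divides by $1-\lambda$ without treating $\lambda=1$; there, as you note, one needs $g(x)>0$ for $x\neq 0$, which its existence proof also relies on. It also tacitly takes every hull element to be a two-point combination $(1-\mu)x+\mu y$, which is exactly your convexity-of-$X_A$ reduction, and it never proves the dominated-case simplification.

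One correction to your closing aside: $y$ on an axis is precisely a case with \emph{no} augmentation. The free-disposal closure of $C$ equals $\mathrm{conv}\bigl(X_A\cup\{w:0\le w\le y\}\bigr)$ (rescale each coordinate of a dominating convex combination). For $y=(Y,0,\ldots,0)$ that box is just the segment from $0$ to $y$, which is already contained in $C$. Genuine augmentation needs something like the paper's $y=(3,0.15)$ against a radius-$2$ quarter disk, where $(3,0)\in X_T\setminus C$.
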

\begin{proof}
Any $z \in X_T$ satisfies $z \le (1-\lambda)\tilde{x} + \lambda (B\chi t)$ for some $\tilde{x} = x_H/(1-\lambda) \in X_A$ (using homogeneity of $g$), hence $z$ lies in the downward closure of the convex hull. Conversely, take any $w = (1-\mu)x + \mu(B\chi t)$ with $x \in X_A$, $\mu \in [0,1]$, and any $z \le w$, $z \ge 0$. Set $\lambda = \mu$ and $x_H = (1-\mu)x$. Then $g(x_H) = (1-\mu)g(x) \le (1-\mu)B$ and $z \le x_H + \lambda B\chi t$, so $z \in X_T$.
\end{proof}

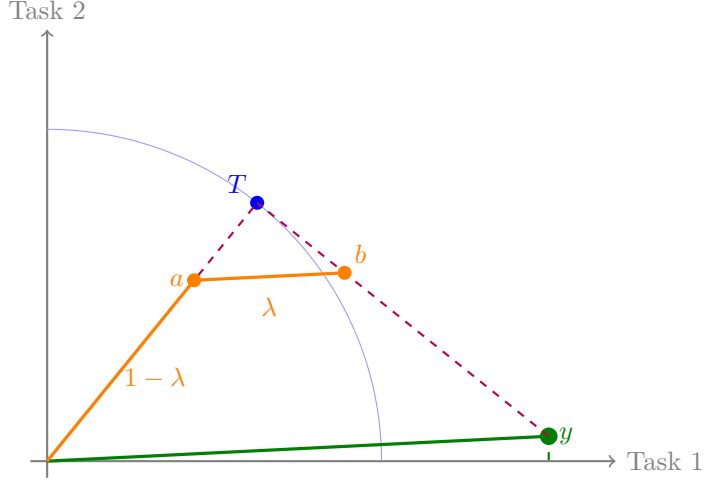
\begin{figure}[htbp]
\centering
\begin{tikzpicture}[scale=2.2]
\draw[->,thick,gray] (-0.1,0) -- (3.4,0) node[right] {\small Task 1};
\draw[->,thick,gray] (0,-0.1) -- (0,2.6) node[above] {\small Task 2};
\draw[thin,blue!40,domain=0:90,samples=80] plot ({2*cos(\x)},{2*sin(\x)});
\draw[very thick,green!50!black] (0,0) -- (3.0,0.15);
\draw[thick,dashed,green!50!black] (3.0,0.15) -- (3.0,0);
\fill[green!50!black] (3.0,0.15) circle (1.5pt);
\node[green!50!black,right] at (3.0,0.15) {\small $y$};
\fill[blue] (1.256,1.557) circle (1.2pt);
\node[blue,above left] at (1.256,1.557) {\small $T$};
\draw[thick,dashed,purple] (3.0,0.15) -- (1.256,1.557);
\draw[thick,dashed,purple] (0,0) -- (1.256,1.557);
\fill[orange] (0.879,1.090) circle (1.2pt);
\node[orange,left] at (0.879,1.090) {\small $a$};
\draw[very thick,orange] (0,0) -- (0.879,1.090);
\node[orange,right] at (0.40,0.50) {\small $1-\lambda$};
\fill[orange] (1.779,1.135) circle (1.2pt);
\node[orange,above right] at (1.779,1.135) {\small $b$};
\draw[very thick,orange] (0.879,1.090) -- (1.779,1.135);
\node[orange,below] at (1.329,1.05) {\small $\lambda$};
\end{tikzpicture}
\caption{Technology adoption as a convex combination}
\label{fig:technology-tangent-segment}
\end{figure}

The new PPS is larger than the previous one whenever the technology vector ``pierces'' the old PPS, that is, technology has an \emph{absolute advantage} over the worker in some combination of human tasks. When the entire technology vector is inside the worker's PPS, they do not gain anything from it, because the resource spent on working with the technology is less productive than doing the same combination of tasks manually.

Let \(x_t\) denote the combination of tasks that the worker could produce on their own in the exact direction of technology.

\begin{definition}[Technology-direction bundle]
\label{def:x-t}
For technology direction $t$, define $x_t$ as the point on the PPF in direction $t$:
$$
x_t = \frac{B}{g(t)} t.
$$
\end{definition}

This follows from homogeneity: \(g(\epsilon t) = \epsilon g(t)\), so the maximum \(\epsilon\) satisfying \(g(\epsilon t) \le B\) is \(\epsilon^* = B/g(t)\).

\begin{definition}[Absolute advantage]
\label{def:absolute-advantage}
Technology $(t, \chi)$ has \emph{absolute advantage} over a worker with resource use function $g$ if
$$
\chi g(t) > 1.
$$
\end{definition}

Equivalently, \(\|B\chi t\| > \|x_t\|\): technology can produce more in direction \(t\) than the worker. Since \(\|x_t\| = B/g(t)\) and \(\|B\chi t\| = B\chi\), the condition reduces to \(\chi > 1/g(t)\). Note that absolute advantage is independent of budget \(B\). Absolute advantage can arise either because technology is very capable (\(\chi\) large) or because the worker is particularly inefficient in direction \(t\) (large \(g(t)\), meaning high resource use per unit of tasks in that direction).

\begin{proposition}[Absolute advantage necessary for adoption]
\label{prop:absolute-advantage-necessary}
If technology $(t, \chi)$ does not have absolute advantage ($\chi g(t) \le 1$), then $X_T = X_A$: technology does not expand the PPS.
\end{proposition}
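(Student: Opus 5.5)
The plan is to show that the technology point $B\chi t$ already lies in $X_A$. Then the convex-hull characterization in Proposition \ref{prop:convex-hull} collapses to $X_A$ itself, and free disposal adds nothing new.

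\textbf{Step 1.} By homogeneity of degree one of $g$ (Assumption \ref{ass:resource-use}(ii)), we have $g(B\chi t) = B\chi\, g(t)$. When $\chi g(t)\le 1$ this gives $g(B\chi t)\le B$, so $B\chi t\in X_A$. Equivalently, $B\chi t = \chi g(t)\, x_t$ is a scalar multiple in $[0,1]$ of the frontier point $x_t$ from Definition \ref{def:x-t}. The edge case $\chi=0$ is also covered, since then $B\chi t = 0$ and $g(0)=0$.

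\textbf{Step 2.} By convexity of $g$, $X_A$ is convex, so $\text{conv}(X_A\cup\{B\chi t\}) = X_A$.

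\textbf{Step 3.} We must handle the free-disposal closure in Proposition \ref{prop:convex-hull}. Since $g$ is increasing in each component and $g(0)=0$, any $z$ with $0\le z\le w$ and $w\in X_A$ satisfies $g(z)\le g(w)\le B$. Hence $X_A$ is downward-closed within $\mathbb R^N_+$, and $X_T = X_A - \mathbb R^N_+$ (intersected with $\mathbb R^N_+$) equals $X_A$.

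\textbf{Direct check without Proposition \ref{prop:convex-hull}.} As a cross-check, I would also verify the inclusion straight from the definition of $X_T$. Take any $z\le x_H+\lambda B\chi t$ with $g(x_H)\le(1-\lambda)B$. Using convexity and homogeneity, i.e.\ subadditivity of $g$, we get
\[
g(x_H+\lambda B\chi t)\le g(x_H)+\lambda B\chi g(t)\le (1-\lambda)B+\lambda B = B.
\]
Monotonicity then gives $g(z)\le B$, so $X_T\subseteq X_A$. The reverse inclusion follows by taking $\lambda=0$.

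\textbf{Main obstacle.} The one point needing care is the subadditivity step, which is where convexity plus degree-one homogeneity of $g$ is essential. Monotonicity is only weak ("increasing"), so I would use it only in the weak sense $z\le w\Rightarrow g(z)\le g(w)$.
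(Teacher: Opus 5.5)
Your proposal is correct and follows the paper's argument: homogeneity gives $g(B\chi t)=B\chi\,g(t)\le B$, so the technology point lies in $X_A$, and convexity of $X_A$ collapses the convex hull to $X_A$. Your Step 3 and your subadditivity check straight from the definition of $X_T$ add rigor rather than a new route: they make explicit the free-disposal step that the paper's one-line proof leaves implicit.
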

\begin{proof}
Without absolute advantage, $B\chi \le B/g(t)$, so $g(B\chi t) = B\chi g(t) \le B$. Thus the technology point $y = B\chi t \in X_A$. Since $X_A$ is convex, $\text{conv}(X_A \cup \{y\}) = X_A$.
\end{proof}

This proposition is illustrated in Figure \ref{fig:chi-less-than-one}. When the technology vector \(y = B\chi t\) does not extend beyond the worker's PPS, the worker does not adopt technology. The supporting hyperplane at the tangency point shows that the worker can achieve higher output using their own skills alone.\footnote{The figures normalize $B=1$, so the technology vector is shown as $y = \chi t$.}

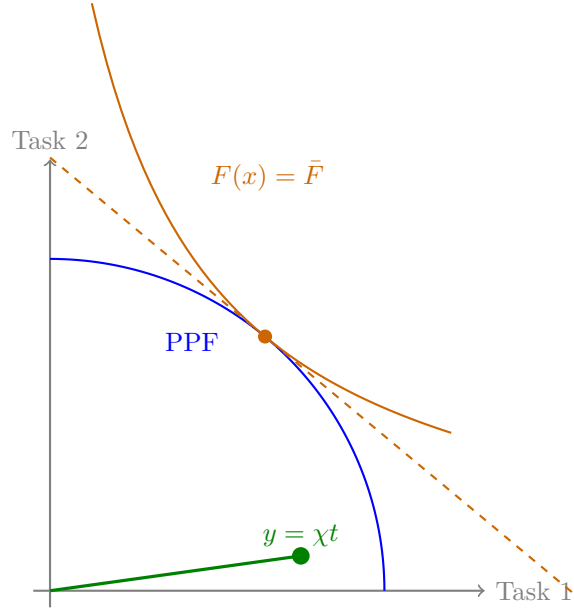
\begin{figure}[htbp]
\centering
\begin{tikzpicture}[scale=2.2]
\draw[->,thick,gray] (-0.1,0) -- (2.6,0) node[right] {\small Task 1};
\draw[->,thick,gray] (0,-0.1) -- (0,2.6) node[above] {\small Task 2};
\draw[thick,blue,domain=0:90,samples=80] plot ({2*cos(\x)},{2*sin(\x)});
\node[blue] at (0.85,1.5) {\small PPF};
\draw[thick,orange!80!black,domain=0.25:2.4,samples=50] plot (\x,{2.797/(\x+0.540)});
\node[orange!80!black,above right] at (0.9,2.35) {\small $F(x)=\bar F$};
\fill[orange!80!black] (1.286,1.532) circle (1.2pt);
\draw[thick,dashed,orange!80!black,domain=0:3.15] plot (\x,{-0.8391*\x+2.6108});
\draw[very thick,green!50!black] (0,0) -- (1.5,0.21);
\fill[green!50!black] (1.5,0.21) circle (1.5pt);
\node[green!50!black,above] at (1.5,0.21) {\small $y = \chi t$};
\end{tikzpicture}
\caption{Technology inside the Production Possibility Set ($\chi g(t) < 1$)}
\label{fig:chi-less-than-one}
\end{figure}

Absolute advantage is necessary but not sufficient for technology adoption. A counterexample is illustrated in Figure \ref{fig:chi-greater-than-one}. The technology already has an absolute advantage, so the worker \emph{could} use it to expand their PPS. In their current job, however, their activities are \emph{directionally far} from the technology, so it is not worth adopting it.\footnote{Geometrically, until the technology vector pierces the budget set defined by the hyperplane separating the PPS from the production function upper contour set, the convex hull of the technology point and the PPS remains strictly within the budget set and offers no revenue-increasing opportunities to the worker.}

\begin{figure}[htbp]
\centering
\begin{tikzpicture}[scale=2.2]
\draw[->,thick,gray] (-0.1,0) -- (2.6,0) node[right] {\small Task 1};
\draw[->,thick,gray] (0,-0.1) -- (0,2.6) node[above] {\small Task 2};
\draw[thick,blue,domain=0:90,samples=80] plot ({2*cos(\x)},{2*sin(\x)});
\node[blue] at (0.85,1.5) {\small PPF};
\draw[thick,orange!80!black,domain=0.25:2.4,samples=50] plot (\x,{2.797/(\x+0.540)});
\node[orange!80!black,above right] at (0.9,2.35) {\small $F(x)=\bar F$};
\fill[orange!80!black] (1.286,1.532) circle (1.2pt);
\draw[thick,dashed,orange!80!black,domain=0:3.15] plot (\x,{-0.8391*\x+2.6108});
\draw[very thick,green!50!black] (0,0) -- (2.5,0.35);
\fill[green!50!black] (2.5,0.35) circle (1.5pt);
\node[green!50!black,above] at (2.5,0.35) {\small $y = \chi t$};
\draw[thick,dashed,green!50!black] (2.5,0.35) -- (2.5,0);
\draw[thick,dashed,green!50!black] (2.5,0.35) -- (1.32,1.5);
\end{tikzpicture}
\caption{Technology has an absolute but not comparative advantage}
\label{fig:chi-greater-than-one}
\end{figure}
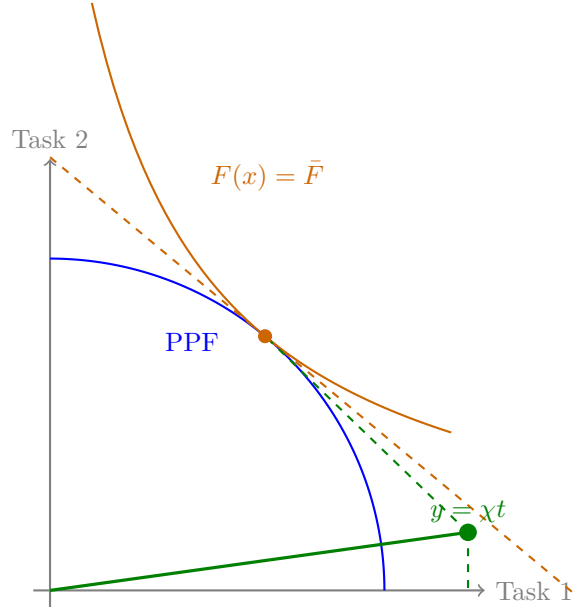

\subsection{The cone of technology adoption}\label{the-cone-of-technology-adoption}

Absolute advantage is necessary but not sufficient for adoption. The worker must also find technology \emph{directionally useful} given their job requirements. We now characterize when adoption occurs.

Recall that \(\varrho(p)\) is the unit revenue function, giving the implicit revenue the worker can achieve per unit of their resource when shadow prices are \(p\). The ``autarky revenue'' can then be written as \[
p_A'x_A = B\varrho(p_A).
\] For technology to be useful for the worker, it has to bring more implicit revenue.

\begin{definition}[Alignment angle]
\label{def:alignment-angle}
The \emph{alignment angle} $\phi$ between autarky prices $p_A$ and technology direction $t$ is
$$
\phi = \arccos(p_A' t) \in [0, \pi],
$$
where both $p_A$ and $t$ are unit vectors.
\end{definition}

The worker is indifferent to adopting technology when the technology point \(y = B\chi t\) brings exactly the same implicit revenue as the autarky activity vector. This happens when \(y\) lies exactly on the supporting hyperplane at the autarky allocation, that is, when \(p_A' y = p_A' x_A = B\varrho(p_A)\). This pins down a threshold capability for each technology direction such that the worker adopts that technology if and only if it is more capable than the threshold. Larger capability always leads to adoption: if the worker adopts at \(\chi\), the output-maximizing bundle under \(\chi\) remains feasible under any \(\chi' > \chi\), so the worker continues to adopt.

\begin{definition}[Adoption threshold]
\label{def:adoption-threshold}
For technology direction $t$ and shadow price $p$ with $p't > 0$, define the \emph{adoption threshold}
$$
\underline{\chi}(t,p) = \frac{\varrho(p)}{p't}.
$$
The adoption threshold is homogeneous of degree zero in prices and equals infinity when $p't = 0$. Because $p$ and $t$ are unit vectors, their inner product $p't = \cos\phi$ is the cosine of the alignment angle.
\end{definition}

The next theorem characterizes the adoption decision for a particular technology direction.

\begin{theorem}[Technology adoption condition]
\label{thm:adoption}
A worker with budget $B$, autarky prices $p_A$, and autarky allocation $x_A$ adopts technology $(t,\chi)$ if and only if
$$
B\chi\cdot p_A' t > p_A' x_A = B\varrho(p_A)
$$
or
$$
\chi \ge \chi_0 := \underline{\chi}(t,p_A).
$$
\end{theorem}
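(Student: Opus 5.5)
The plan is to define ``adopts'' as: the optimal value of the technology adoption problem (Definition \ref{def:tech-adoption}) strictly exceeds autarky output $F(x_A)$, equivalently some $\lambda>0$ is strictly better than $\lambda=0$. By Proposition \ref{prop:convex-hull}, this is the same as asking whether $\max_{z\in X_T}F(z) > F(x_A)$ with $X_T=\text{conv}(X_A\cup\{y\})-\mathbb R^N_+$ and $y=B\chi t$. I will prove the two directions separately using the supporting half-space $H=\{x\ge 0: p_A'x\le p_A'x_A\}$ from Proposition \ref{prop:autarky-prices}. At the end I will translate the inequality into the threshold form by dividing by $Bp_A't>0$.

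\emph{Necessity (no gain when $p_A'y\le p_A'x_A$).} By part (i) of Proposition \ref{prop:autarky-prices}, $X_A\subseteq H$. If also $y\in H$, then $\text{conv}(X_A\cup\{y\})\subseteq H$ because $H$ is convex. Since $p_A\ge 0$, $H$ is closed under free disposal, so $X_T\subseteq H$. Part (ii) of Proposition \ref{prop:autarky-prices} says $x_A$ maximizes $F$ on $H$ (with $R=p_A'x_A$). Hence $\max_{X_T}F\le F(x_A)$, and the worker gains nothing from the technology.

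\emph{Sufficiency (strict gain when $p_A'y>p_A'x_A$).} I will use the feasible path $z(\lambda)=(1-\lambda)x_A+\lambda y\in X_T$, i.e.\ $x_H=(1-\lambda)x_A$, which satisfies $g(x_H)=(1-\lambda)g(x_A)\le(1-\lambda)B$. When $x_A\in\mathbb R^N_{++}$, $F$ is $C^1$ near $x_A$, and by the uniqueness part of Proposition \ref{prop:autarky-prices} we have $\nabla F(x_A)=\mu p_A$ for some $\mu>0$. Then
\[
\left.\frac{d}{d\lambda}F(z(\lambda))\right|_{\lambda=0}=\mu\, p_A'(y-x_A)>0,
\]
so $F(z(\lambda))>F(x_A)$ for small $\lambda>0$. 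Rewriting $p_A'y>p_A'x_A=B\varrho(p_A)$ as $\chi>\varrho(p_A)/p_A't=\underline\chi(t,p_A)$ gives the second form of the condition.

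The main obstacle is that the second form of the statement has $\ge$, while the first has a strict inequality. At $\chi=\chi_0$, $y$ lies exactly on the supporting hyperplane, and the necessity argument shows there is no strict gain: the worker is indifferent, and by strict quasi-concavity $x_A$ remains the unique optimum on $H\supseteq X_T$. So I would read ``$\ge$'' as including this tie, resolved in favour of adoption, and state explicitly that adoption with a strict output gain holds iff $\chi>\chi_0$.

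A secondary difficulty is a boundary autarky optimum, where $\nabla F$ need not exist or be parallel to $p_A$. In that case I would replace the derivative argument with the one-sided directional derivative of the concave-like $F$ along $y-x_A$. I would bound it below using that $p_A$ supports the upper contour set of $F$ at $x_A$, and that $y-x_A$ points strictly outside $H$, so it enters the strict upper contour set. If this fails, I would fall back to assuming interiority, which is the case the paper's uniqueness result covers.
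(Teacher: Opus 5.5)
Your proof is correct and follows the paper's route. The paper's proof only asserts that adoption is equivalent to $y=B\chi t$ lying on the profitable side of the hyperplane through $x_A$ with normal $p_A$; your half-space containment argument (via part (ii) of Proposition~\ref{prop:autarky-prices}) and your directional derivative $\nabla F(x_A)'(y-x_A)=\mu\,p_A'(y-x_A)>0$ supply exactly the two halves of that equivalence, and you correctly read the tie at $\chi=\chi_0$ as indifference.

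Keep the interiority restriction and drop your tentative boundary-case argument, because it fails. At a boundary optimum several $p_A$ can satisfy (i)--(ii), and $p_A'(y-x_A)>0$ for one of them does not imply that $y-x_A$ enters $\{F>F(x_A)\}$. For example, $g(x)=x_1+4x_2$, $F(x)=x_1+x_2+x_1x_2/(x_1+x_2)$ and $t=e_2$ give $x_A=(B,0)$ with admissible $p_A\propto\nabla g(x_A)=(1,4)$, yet there is no adoption for $\chi\in(1/4,1/2]$. So the statement itself needs uniqueness of $p_A$, which the paper guarantees only for interior $x_A$.
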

\begin{proof}
Technology is adopted if and only if the technology point $y = B\chi t$ lies on the profitable side of the supporting hyperplane through the autarky allocation $x_A$ with normal vector $p_A$. This is equivalent to $p_A' y > p_A' x_A$.
\end{proof}

The adoption condition does not depend on the overall scale of the budget \(B\): both the value of the technology point (\(p_A'\,B\chi t\)) and the value of the autarky allocation (\(p_A'x_A = B\varrho(p_A)\)) scale linearly in \(B\), so \(B\) cancels from the inequality. We interpret \(\chi_0\) as the capability threshold for adopting \emph{some} technology: at \(\chi = \chi_0\) the worker is indifferent, and for \(\chi\) just above \(\chi_0\) optimal adoption can be arbitrarily small (the worker is at the extensive margin with \(\lambda^* > 0\) but close to zero).

Since \(\cos\phi = p_A' t\), the adoption condition becomes \(\chi \cos\phi \ge \varrho(p_A)\). The condition has a team-production interpretation: the collaboration is worthwhile when the AI's contribution per unit of supervision (\(\chi\)), weighted by alignment (\(\cos\phi\)), exceeds the opportunity cost of supervision (\(\varrho(p_A)\)). This implies a geometric representation of the adoption decision. We can invert the problem and ask: for a given technology capability \(\chi\), what directions would the worker adopt?

\begin{theorem}[Cone of technology adoption]
\label{thm:cone-adoption}
For a worker with autarky prices $p_A$ and unit revenue function $\varrho$, the \emph{cone of technology adoption} for capability $\chi > \varrho(p_A)$ is the set of technology directions the worker adopts:
$$
\mathcal{C}(\chi, p_A, \varrho) = \left\{ t \in \mathbb R^{N} : p_A' t > \varrho(p_A)/\chi,\quad \|t\|_2=1 \right\}.
$$
The cone is centered on $p_A$ with half-angle $\phi_0 = \arccos[\varrho(p_A)/\chi]$. A worker adopts technology $(\chi, t)$ if and only if $t$ is within the cone of technology adoption, $t\in \mathcal C(\chi,p_A,\varrho)$. Equivalently, the alignment angle $\phi$ is less than $\phi_0$.
\end{theorem}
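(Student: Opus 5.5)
The plan is to derive Theorem \ref{thm:cone-adoption} directly from Theorem \ref{thm:adoption} by fixing $\chi$ and reading the adoption inequality as a condition on the direction $t$. By Theorem \ref{thm:adoption}, the worker adopts $(t,\chi)$ if and only if $B\chi\, p_A't > B\varrho(p_A)$. Dividing by $B>0$ and $\chi>0$ gives
$$
p_A't > \varrho(p_A)/\chi .
$$
This is exactly the defining inequality of $\mathcal C(\chi,p_A,\varrho)$ once we impose $\|t\|_2=1$ from Definition \ref{def:technology}. So the "if and only if" membership claim is immediate. The case $p_A't\le 0$ needs no separate treatment: the right-hand side is positive, since $\varrho(p_A)>0$ whenever $p_A\ge 0$ is nonzero and $X_A$ contains a positive multiple of some unit vector. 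Such directions are therefore simply excluded, consistent with $\underline\chi=\infty$ in Definition \ref{def:adoption-threshold}.

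The geometric description comes from the unit normalization. Because $\|p_A\|_2=\|t\|_2=1$, we have $p_A't=\cos\phi$ with $\phi\in[0,\pi]$ the alignment angle of Definition \ref{def:alignment-angle}. The hypothesis $\chi>\varrho(p_A)$ ensures $c:=\varrho(p_A)/\chi\in(0,1)$, so $\phi_0=\arccos c\in(0,\pi/2)$ is well defined. Since $\arccos$ is strictly decreasing on $[-1,1]$, the condition $\cos\phi>c$ is equivalent to $\phi<\phi_0$. Hence the set is the open spherical cap of angular radius $\phi_0$ around $p_A$, that is, a circular cone centered on $p_A$ with half-angle $\phi_0$. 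It is nonempty: $t=p_A$ satisfies the inequality because $1>c$. I would also remark that if $\chi\le\varrho(p_A)$ the set is empty, which is why the statement restricts to $\chi>\varrho(p_A)$.

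The main obstacle is bookkeeping between the cap on the unit sphere and the feasible direction set. Definition \ref{def:technology} requires $t\in\mathbb R^N_+$, while the displayed cone is stated for $t\in\mathbb R^N$. I would note that the adoption statement applies to admissible directions, i.e.\ to $\mathcal C\cap\mathbb R^N_+$. The centering claim still holds because $p_A\in\mathbb R^N_+$ lies in this intersection.

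A second subtlety is the boundary case. Theorem \ref{thm:adoption} writes both a strict inequality and $\chi\ge\chi_0$. I would adopt the strict version, treating the indifference case $\phi=\phi_0$ as non-adoption (or as weak adoption with $\lambda^*=0$), which matches the strict inequality in the cone's definition.
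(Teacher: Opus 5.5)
Your proposal is correct and follows the paper's route. The paper gives no separate proof: it obtains the cone by rewriting the adoption condition of Theorem \ref{thm:adoption} as $\chi\cos\phi>\varrho(p_A)$ using $p_A't=\cos\phi$, then inverting for $t$ at fixed $\chi$, exactly as you do. Your extra remarks are sensible clean-ups of points the paper leaves loose: the positivity of $\varrho(p_A)$, restricting to admissible $t\in\mathbb R^N_+$, and settling the strict-versus-weak inequality in favor of the strict one, which is why the cone is empty at $\chi=\varrho(p_A)$.
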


The cone has several important properties. First, it is non-empty exactly when \(\chi > \varrho(p_A)\). When \(\chi = \varrho(p_A)\), only the direction \(t = p_A\) is adopted. This provides a \emph{threshold condition} for technology adoption that is distinct from absolute advantage. Near the threshold, the cone widens rapidly. Formally, \(\partial\phi_0/\partial\chi\) is infinite at \(\chi = \varrho\). Finally, as technology capability increases without bound, \(\chi \to \infty\), the cone expands to a hemisphere (\(\phi_0 \to \pi/2\)). Figure \ref{fig:cone-3d} illustrates the cone geometry and Figure \ref{fig:cone-half-angle} shows how the half-angle varies with capability.

\begin{figure}[htbp]
\centering
\begin{tikzpicture}[scale=1.1]
  
  \newcommand{\gaugecircle}[2]{%
    \pgfmathsetmacro{\chiVal}{#1}%
    \pgfmathsetmacro{\rVal}{0.5*\B*sqrt(\chiVal*\chiVal - 1)}%
    \pgfmathsetmacro{\cx}{\chiVal*\B*\tx}%
    \pgfmathsetmacro{\cy}{\chiVal*\B*\ty}%
    \coordinate (center#2) at (\cx, \cy);
    \draw[dashed, rotate around={\tAng:(center#2)}, gray] 
      ($(center#2) + (0,{\rVal})$) arc (90:-90:{\rVal*0.25} and {\rVal});
    \draw[thick, rotate around={\tAng:(center#2)}, blue!70]
      ($(center#2) + (0,{\rVal})$) arc (90:270:{\rVal*0.25} and {\rVal});
  }
  
  \def\tAng{40}      
  
  \pgfmathsetmacro{\tx}{cos(\tAng)}
  \pgfmathsetmacro{\ty}{sin(\tAng)}
  
  \draw[->, thick, gray!70] (0,0) -- (5,0) node[right] {\small $x_1$};
  \draw[->, thick, gray!70] (0,0) -- (0,4.5) node[above] {\small $x_2$};
  \draw[->, thick, gray!70] (0,0) -- (-1.5,-2.6) node[below left] {\small $x_3$};
  
  \def\B{3}
  \shade[ball color=blue!15, opacity=0.5] (0,0) circle (\B);
  \draw[thick, blue!50] (0,0) circle (\B);
  
  \draw[thin, dashed, gray!60] ({\B*cos(25)}, {\B*sin(25)*0.25}) 
    arc (0:180:{\B*cos(25)} and {\B*cos(25)*0.2});
  \draw[thin, gray!60] ({\B*cos(25)}, {\B*sin(25)*0.25}) 
    arc (0:-90:{\B*cos(25)} and {\B*cos(25)*0.2});
  
  \coordinate (xt) at ({\B*\tx}, {\B*\ty});
  \fill[red] (xt) circle (2pt);
  \node[red, below left] at (xt) {\small $x_t$};
  
  \gaugecircle{1.05}{D}  
  \gaugecircle{1.15}{A}
  \gaugecircle{1.35}{B}
  \gaugecircle{1.7}{C}
  
  \pgfmathsetmacro{\ychi}{1.9}
  \coordinate (y) at ({\ychi*\B*\tx}, {\ychi*\B*\ty});
  \draw[green!50!black, dotted, thick] (0,0) -- (xt);
  \draw[->, thick, green!50!black] (xt) -- (y);
  \fill[green!50!black] (y) circle (2pt);
  \node[green!50!black, right] at (y) {\small $t$};
  
  \fill (0,0) circle (1.5pt);
  \node[below left] at (-0.2,-0.2) {\small $O$};
  
\end{tikzpicture}
\caption{Cone of technology adoption in 3D}
\label{fig:cone-3d}
\end{figure}

The unit sphere represents all possible technology directions \(t\). The cone is centered on the autarky price \(p_A\) with half-angle \(\phi_0 = \arccos(\varrho/\chi)\). The worker adopts any technology direction inside the cone. The dashed arcs show cross-sections of the cone at different capabilities \(\chi\), the cross-section has radius \(\sqrt{\chi^2-\varrho^2}\).

\begin{figure}[htbp]
\centering
\begin{tikzpicture}[scale=1.0]
\draw[->,thick] (0,0) -- (6.5,0) node[right] {\small $\chi$};
\draw[->,thick] (0,0) -- (0,3.5) node[above] {\small $\phi_0$};

\draw[thick] (-0.1,3.14159) -- (0.1,3.14159);
\node[left] at (-0.15,3.14159) {\scriptsize $\frac{\pi}{2}$};
\draw[thick] (-0.1,1.5708) -- (0.1,1.5708);
\node[left] at (-0.15,1.5708) {\scriptsize $\frac{\pi}{4}$};

\draw[thick] (2,-0.1) -- (2,0.1);
\node[below] at (2,-0.15) {\scriptsize $\rho$};
\draw[thick] (4,-0.1) -- (4,0.1);
\node[below] at (4,-0.15) {\scriptsize $2\rho$};
\draw[thick] (6,-0.1) -- (6,0.1);
\node[below] at (6,-0.15) {\scriptsize $3\rho$};

\draw[dashed,gray] (0,3.14159) -- (6.3,3.14159);

\draw[very thick,black,domain=2.01:6,samples=100] plot (\x,{acos(2/\x)*3.14159/90});

\fill[black] (2,0) circle (2.5pt);
\node[below right] at (2.1,0.2) {\scriptsize threshold};

\end{tikzpicture}
\caption{Cone half-angle as a function of technology capability}
\label{fig:cone-half-angle}
\end{figure}
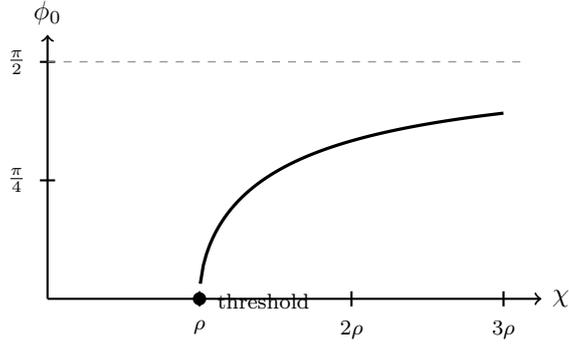

\begin{corollary}[Comparative statics of the cone]
\label{cor:cone-comparative-statics}
Let $\varrho = \varrho(p_A)$ denote the worker's unit revenue at autarky. The cone of technology adoption has the following properties as a function of capability $\chi$:
\begin{enumerate}
\item \textbf{Threshold above absolute advantage.} The cone is non-empty if and only if $\chi > \varrho$. Since $\varrho \ge 1$ generically, with equality only when the worker's PPF passes through the unit simplex, adoption requires capability strictly above the absolute-advantage threshold $\chi = 1$.
\item \textbf{Fast widening near threshold.} As $\chi \to \varrho^+$, the cone half-angle satisfies
$$
\phi_0(\chi) \approx \sqrt{2\left(\frac{\chi}{\varrho} - 1\right)},
$$
so that $\phi_0$ grows with the square root of excess capability. In particular, $d\phi_0/d\chi \to +\infty$ as $\chi \to \varrho^+$: the range of adopted directions expands rapidly just above the threshold.
\item \textbf{Monotonic widening.} For all $\chi > \varrho$, $d\phi_0/d\chi > 0$. The cone half-angle increases monotonically from $0$ (at $\chi = \varrho$) to $\pi/2$ (as $\chi \to \infty$).
\end{enumerate}
\end{corollary}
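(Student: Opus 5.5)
The plan is to work directly from the closed form of the half-angle in Theorem \ref{thm:cone-adoption}, $\phi_0(\chi)=\arccos(\varrho/\chi)$ with $\varrho=\varrho(p_A)$ fixed. Every claim then reduces to elementary properties of $\arccos$ on $[0,1]$, plus one comparison between $\varrho(p_A)$ and the absolute-advantage threshold from Definition \ref{def:absolute-advantage}.

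\textbf{Part 1.} I would first establish the non-emptiness equivalence. The cone consists of unit $t\in\mathbb R^N_+$ with $p_A't>\varrho/\chi$. By Cauchy--Schwarz, $p_A't\le 1$, with equality at $t=p_A$, which is admissible because $p_A\in\mathbb R^N_+$ has unit length. So the cone is non-empty if and only if $\varrho/\chi<1$.

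For the comparison with absolute advantage, I would use the feasible point $x=p_A/g(p_A)$, which satisfies $g(x)=1$. From the definition of $\varrho$ as a maximum, this gives $\varrho(p_A)\ge p_A'x=1/g(p_A)$. Hence adoption in the best direction $t=p_A$ requires $\chi>\varrho\ge 1/g(p_A)$, which is strictly more than absolute advantage in that direction unless $p_A$ also supports the unit PPS at $x_{p_A}$.

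The paper states this as $\varrho\ge1$ versus the threshold $\chi=1$. I would read that as the normalization under which the worker's unit PPF reaches unit distance, i.e.\ $g=1$ on the relevant simplex or sphere points. Under that normalization the bound becomes $\varrho\ge1$, with equality exactly in the stated boundary case. I expect this step to be the main obstacle. It is not about computation but about pinning down the precise normalization that makes ``$\varrho\ge1$ generically'' true. I would state it explicitly as an assumption on $g$ and record the normalization-free inequality $\varrho\ge 1/g(p_A)$ as the substantive content.

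\textbf{Parts 2 and 3.} I would differentiate:
$$
\frac{d\phi_0}{d\chi}=\frac{-1}{\sqrt{1-\varrho^2/\chi^2}}\cdot\left(-\frac{\varrho}{\chi^2}\right)=\frac{\varrho}{\chi\sqrt{\chi^2-\varrho^2}}.
$$
This derivative is strictly positive for all $\chi>\varrho$, which gives monotonicity, and it diverges as $\chi\to\varrho^+$.

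For the asymptotic approximation, I would set $\varepsilon=1-\varrho/\chi$ and use $\cos\phi\approx 1-\phi^2/2$, which gives $\arccos(1-\varepsilon)=\sqrt{2\varepsilon}\,(1+O(\varepsilon))$. Since $\varepsilon=(\chi/\varrho-1)(\varrho/\chi)=(\chi/\varrho-1)+O\bigl((\chi/\varrho-1)^2\bigr)$, this yields $\phi_0\approx\sqrt{2(\chi/\varrho-1)}$.

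Finally, the endpoints follow by continuity of $\arccos$:
\begin{itemize}
\item at $\chi=\varrho$, $\phi_0=\arccos 1=0$;
\item as $\chi\to\infty$, $\varrho/\chi\to0$, so $\phi_0\to\arccos 0=\pi/2$.
\end{itemize}
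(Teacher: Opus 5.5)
Your proposal is correct and follows the same route as the paper. The paper gives no separate proof: the three properties are read off directly from the closed form $\phi_0=\arccos(\varrho/\chi)$ in the cone theorem, just as you do with the derivative, the expansion $\arccos(1-\varepsilon)=\sqrt{2\varepsilon}\,(1+O(\varepsilon))$, and the endpoint limits. Your caveat on Part 1 is justified and is not a gap in your argument. The paper offers no support for ``$\varrho\ge 1$ generically,'' and that claim depends on scale: replacing $g$ by $cg$ divides $\varrho$ by $c$. Your normalization-free bound $\varrho(p_A)\ge 1/g(p_A)$ is the correct substantive content. For a general direction it reads $\varrho(p_A)\ge p_A't/g(t)$, which shows that every adopted direction has absolute advantage.
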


Figure \ref{fig:cone-half-angle} illustrates these properties: the half-angle is zero at \(\chi = \varrho\), rises steeply just above the threshold (reflecting the square-root behavior), and increases monotonically toward \(\pi/2\) as capability grows. The steep initial slope means that small improvements in capability near the threshold translate into large expansions of the set of adopted directions---a prediction that could help explain rapid adoption cascades when a technology crosses from ``not quite good enough'' to ``just good enough.''

\subsection{The intensive margin of technology}\label{the-intensive-margin-of-technology}

The cone of technology adoption only describes that the worker uses \emph{some} technology, not how much. To understand this, we must also look at the production function \(F\).

The intensive margin is economically central because it determines whether technology is a marginal supplement to human work or a full replacement. In particular, whether adoption eventually reaches \(\lambda^* = 1\) governs how much of the worker's output is effectively pinned down by the technology direction rather than the worker's own comparative advantage, and it determines how strongly technology reshapes the job's implied shadow prices.

Let \(y=\chi B t\) the vector of activities that technology can produce when all the resources are spent on it. Optimal production is the solution to the problem \[
F(X_A,y) :=\max_{x,\lambda} F((1-\lambda)x + \lambda y)\qquad\text{s.t.}
\qquad x\in X_A.
\] Under what conditions will \(\lambda\) be interior, \(\lambda\in(0,1)\)?

\begin{proposition}[Unique optimal technology use]
\label{prop:optimal-tech-use}
Suppose $F$ is strictly quasi-concave and $C^2$. Let $y = B\chi t$ and $X_T = \mathrm{conv}(X_A \cup \{y\})$. Then:
\begin{enumerate}
\item The optimum $\tilde x_* = \arg\max_{\tilde x \in X_T} F(\tilde x)$ is unique.
\item There exist unique $x_* \in X_A$ and $\lambda_* \in [0,1]$ such that $\tilde x_* = (1-\lambda_*)x_* + \lambda_* y$.
\item The shadow price $p_* = \nabla F(\tilde x_*)/\|\nabla F(\tilde x_*)\|$ at the optimum is unique.
\end{enumerate}
\end{proposition}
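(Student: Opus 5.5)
The plan is to treat the three claims in order: existence and uniqueness of the optimum over $X_T$, then the decomposition, then the shadow price. Uniqueness of the optimum rests on compactness and convexity of $X_T$ together with strict quasi-concavity of $F$. The decomposition needs a canonical choice, and I expect it to be the main obstacle. The shadow price follows from Assumption \ref{ass:production}.

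\emph{Step 1 (optimum exists and is unique).} $X_A$ is compact and convex by the proof of Proposition \ref{prop:existence}. Hence $X_T=\mathrm{conv}(X_A\cup\{y\})$ is compact, as the image of the compact set $X_A\times[0,1]$ under $(x,\lambda)\mapsto(1-\lambda)x+\lambda y$, and it is convex. Since $F$ is continuous, Weierstrass's theorem gives a maximizer.

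For uniqueness, suppose $\tilde x_1\neq\tilde x_2$ both attain $c=\max_{X_T}F>0$. Their midpoint $m$ lies in $X_T$ by convexity. By strict convexity of $\{F\ge c\}$, $m$ is an interior point of that set. If $F(m)=c$, then $m$ would be an interior local minimum of $F$. When $m\in\mathbb R^N_{++}$ this forces $\nabla F(m)=0$, contradicting Assumption \ref{ass:production}(iii). Hence $F(m)>c$, contradicting maximality.

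The boundary case $m\notin\mathbb R^N_{++}$ needs separate handling. I would move slightly from $m$ into the interior of $\{F\ge c\}$ and use homogeneity of degree one (scale by $(1-\varepsilon)$) together with continuity to obtain a point of $X_T$ with $F>c$.

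\emph{Step 2 (decomposition).} This step is the delicate one, because a representation $\tilde x_*=(1-\lambda)x+\lambda y$ is not literally unique. If $\tilde x_*$ lies in the interior of $X_A$, several pairs $(x,\lambda)$ work. If $\tilde x_*=y$, then any $x$ works with $\lambda=1$. I would therefore fix the natural canonical representation: take the smallest feasible $\lambda$, or equivalently the point $x_*$ of $X_A$ farthest from $y$ on the ray from $y$ through $\tilde x_*$.

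Concretely, assume $\tilde x_*\neq y$ and consider the ray $s\mapsto y+s(\tilde x_*-y)$ for $s\ge1$. Its intersection with the compact convex set $X_A$ is a closed interval $[s_0,s_1]$, which is nonempty because $\tilde x_*\in X_T$. Set $x_*=y+s_1(\tilde x_*-y)$ and $\lambda_*=1-1/s_1$; this pair is unique by construction.

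I would then show this is the only representation relevant to the optimum, with $x_*$ on the PPF ($g(x_*)=B$) whenever $\lambda_*>0$. The argument is that the optimum lies on the frontier of $X_T$, by monotonicity of $F$ and homogeneity. If $x_*$ were interior to $X_A$, one could push it outward along the ray and strictly enlarge the bundle in a direction that raises $F$. That would contradict optimality unless the ray direction has zero value at $\nabla F$, which strict quasi-concavity rules out.

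For $\tilde x_*=y$, I would state the convention $\lambda_*=1$ with $x_*$ irrelevant, or take $x_*$ to be the frontier point on the ray from $y$ through $x_A$. I would flag that the uniqueness claim in item 2 holds modulo these conventions.

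\emph{Step 3 (shadow price).} Since $\tilde x_*$ is unique by Step 1, $\nabla F(\tilde x_*)$ is a single vector. It is nonzero on $\mathbb R^N_{++}$ by Assumption \ref{ass:production}(iii), so the normalization $p_*=\nabla F(\tilde x_*)/\|\nabla F(\tilde x_*)\|$ is well defined.

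To see that $p_*$ is the supporting normal to $X_T$ at $\tilde x_*$, I would repeat the tangency argument of Proposition \ref{prop:autarky-prices} with $X_T$ in place of $X_A$. Uniqueness of the supporting hyperplane comes from the smoothness of the upper contour set of $F$ at $\tilde x_*$, even though $X_T$ has a kink at the technology point $y$. If $\tilde x_*$ lies on the boundary of the orthant, I would state $p_*$ as the limiting normalized gradient, or restrict attention to interior optima as the paper does in Proposition \ref{prop:autarky-prices}.
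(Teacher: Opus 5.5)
Your Steps 1 and 3 are sound. Step 1 is in fact more careful than the paper, which appeals to strict concavity of $F$, a property an HD1 function cannot have along rays. HD1 scaling of the midpoint also settles your boundary case directly: $m$ interior to $\{F\ge c\}$ gives $(1-\varepsilon)m\in\{F\ge c\}$, hence $F(m)\ge c/(1-\varepsilon)>c$.

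\textbf{The gap is in Step 2.} The paper gets uniqueness of $\lambda_*$ by asserting that $h(\lambda)=\max_{x\in X_A}F((1-\lambda)x+\lambda y)$ is strictly concave. It then sets $x_*=(\tilde x_*-\lambda_*y)/(1-\lambda_*)$ for $\lambda_*<1$. You instead pick a point of the ray-interval $[s_0,s_1]$ and call it unique ``by construction.'' That is a selection rule, not a proof of item 2. Your two descriptions of the rule also disagree: $\lambda=1-1/s$ increases in $s$, so the point farthest from $y$ gives the \emph{largest} feasible $\lambda$, not the smallest. Your claim that this is ``the only representation relevant to the optimum'' is not established either. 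Pushing an interior $x_*$ outward only shows that every representing $x$ lies on the boundary of $X_A$, whereas uniqueness of $\lambda_*$ is exactly the statement $s_0=s_1$. Finally, your first example of non-uniqueness (an optimum interior to $X_A$) cannot occur, by the same HD1 scaling.

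\textbf{The missing step is a supporting-hyperplane argument.} Suppose $\tilde x_*\neq y$ has two representations. Then one of them has $\lambda\in(0,1)$, and the first-order condition $p_*'z\le p_*'\tilde x_*$ on $X_T$ forces $p_*'y=p_*'\tilde x_*$. Hence the whole segment $\{y+s(\tilde x_*-y):s\in[s_0,s_1]\}$ lies in the exposed face $X_A\cap\{z:p_*'z=p_*'\tilde x_*\}$. So $\lambda_*$ is unique whenever that face is a single point, for example under CET with $p_*\gg 0$.

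\textbf{This fails under Assumption \ref{ass:resource-use},} which explicitly allows flat faces. Take the following example:
\begin{itemize}
\item $B=2$ and $g(x)=\max\{x_1+x_2,\,2x_1\}$;
\item $t=(1,0)$ and $\chi=1$, so $y=(2,0)$ and $\chi g(t)=2$;
\item $F=x_1^{3/4}x_2^{1/4}$.
\end{itemize}
Then $X_T=\{x\ge 0:x_1+x_2\le 2\}$ and $\tilde x_*=(3/2,1/2)$. Every $\lambda_*\in[1/2,3/4]$ is optimal, with $x_*$ running from $(1,1)$ to $(0,2)$. The worker is strictly above the entry threshold: with $p=(3/4,1/4)\propto p_A$, we have $p'y=3/2>1=p'x_A$. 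The paper's $h$ is constant on $[1/2,3/4]$, so its strict-concavity route fails here as well.

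Your suspicion about item 2 is justified. To close Step 2, either add a strict-convexity hypothesis on the PPF and run the argument above, or record the counterexample, rather than fall back on a convention.
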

\begin{proof}
Uniqueness of $\tilde x_*$ follows from strict quasi-concavity of $F$ over the convex set $X_T$. The decomposition follows from the convex hull representation (Proposition \ref{prop:convex-hull}). Uniqueness of $\lambda_*$ follows from strict concavity of $f(\lambda) = \max_{x \in X_A} F((1-\lambda)x + \lambda y)$ (which holds because $F$ is strictly quasi-concave and HD1), and $x_*$ is then pinned down by $x_* = (\tilde x_* - \lambda_* y)/(1-\lambda_*)$ when $\lambda_* < 1$. Uniqueness of $p_*$ follows from $C^2$ differentiability of $F$. See Appendix \ref{sec:proof-optimal-tech-use}.
\end{proof}

Take the optimal human task vector \(x_*\). Connect this point with the point \(y\) and study how the production function \(F\) varies along this segment. Define \[
f(\lambda) = F[(1-\lambda)x_* + \lambda y]\qquad\forall\lambda \in[0,1].
\] This univariate function is strictly concave in \(\lambda\) and maximized at \(\lambda_*\). The solution is interior if and only if the derivative of \(f\) is positive at zero and negative at one, \(f'(0) > 0\) and \(f'(1) < 0\).

The entry and corner conditions can be expressed in terms of the adoption threshold (Definition \ref{def:adoption-threshold}). See Appendix \ref{sec:proof-interior-adoption} for the full derivation. The entry threshold and the corner threshold are both instances of the adoption threshold (Definition \ref{def:adoption-threshold}), evaluated at different prices: \[
\chi_0 = \underline{\chi}(t, p_A), \qquad \chi_{100} = \underline{\chi}(t, p_{100}),
\] where \(p_{100} = \nabla F(t)/\|\nabla F(t)\|\) is the shadow price at the technology direction.

\begin{theorem}[Interior adoption]
\label{thm:interior-adoption}
Suppose $F$ is strictly quasi-concave, $C^2$, and homogeneous of degree one. Let $\lambda_*(\chi)$ denote the unique optimal technology intensity from Proposition \ref{prop:optimal-tech-use}. Then:
\begin{enumerate}
\item \emph{No adoption:} If $p_A't = 0$, then $\lambda_* = 0$ for all $\chi$: the worker never adopts.
\item \emph{Never all-in:} If $F(t) = 0$, then $\lambda_* < 1$ for all $\chi$: the worker never goes all in.
\item \emph{Partial adoption:} If $p_A't > 0$ and $F(t) > 0$, then $\chi_{100} > \chi_0$: partial adoption occurs for $\chi \in (\chi_0, \chi_{100})$, with $\lambda_* \in (0,1)$. For $\chi \ge \chi_{100}$, $\lambda_* = 1$ (all-in).
\end{enumerate}
\end{theorem}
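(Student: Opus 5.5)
The plan is to reduce each of the three claims to a first-order (supporting-hyperplane) condition at one of the two ends of the segment $[x_*,y]$, with $y=B\chi t$. Because $X_T=\mathrm{conv}(X_A\cup\{y\})$ is convex and $F$ is strictly quasi-concave with $\nabla F\neq 0$, a point $\tilde x\in X_T$ is the unique maximizer if and only if $\nabla F(\tilde x)'(z-\tilde x)\le 0$ for all $z\in X_T$. Necessity is the usual directional-derivative argument. Sufficiency follows because the upper contour set of $F$ at $\tilde x$ lies in the half-space $\{z:\nabla F(\tilde x)'(z-\tilde x)\ge 0\}$, and strict quasi-concavity rules out ties. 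Since $X_T$ is the convex hull of $X_A$ and $y$, it suffices to check the inequality on $X_A$ and at $y$. On $X_A$ the maximum of $p'z$ is $B\varrho(p)$. This gives two characterizations, which I will use throughout.

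\emph{Entry.} $\lambda_*=0$, i.e.\ $\tilde x_*=x_A$, if and only if $p_A'y\le p_A'x_A=B\varrho(p_A)$, that is $\chi\le\chi_0$. This is just Theorem \ref{thm:adoption} restated.

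\emph{Corner.} $\lambda_*=1$, i.e.\ $\tilde x_*=y$, if and only if $\nabla F(y)'y\ge \max_{x\in X_A}\nabla F(y)'x=B\varrho(\nabla F(y))$.
\begin{itemize}
\item By homogeneity of degree one, $\nabla F$ is homogeneous of degree zero, so $\nabla F(y)\propto\nabla F(t)\propto p_{100}$.
\item Euler's theorem gives $p_{100}'t=F(t)/\|\nabla F(t)\|$.
\item Hence, when $F(t)>0$, the corner condition reads $B\chi\, p_{100}'t\ge B\varrho(p_{100})$, i.e.\ $\chi\ge\chi_{100}$. The threshold $\chi_{100}$ is finite and well defined exactly because $p_{100}'t>0$.
\end{itemize}

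With these characterizations the three cases follow.

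\emph{Case 1.} If $p_A't=0$, then $p_A'y=0<p_A'x_A$ for every $\chi$. (Here $p_A'x_A>0$, since $F(x_A)>0$ and $x_A$ maximizes $p_A'x$ over a set with nonempty interior.) The entry characterization then gives $\lambda_*=0$ for all $\chi$.

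\emph{Case 2.} If $F(t)=0$, then $F(y)=\chi B F(t)=0<F(x_A)$, and $x_A\in X_T$. So $y$ is never optimal and $\lambda_*<1$ for all $\chi$.

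\emph{Case 3.} Suppose $p_A't>0$ and $F(t)>0$. The entry characterization gives $\lambda_*>0$ exactly when $\chi>\chi_0$, and the corner characterization gives $\lambda_*=1$ exactly when $\chi\ge\chi_{100}$. For the strict inequality $\chi_{100}>\chi_0$ I would not compare the two ratios $\varrho(p)/p't$ directly. Instead I use the characterizations themselves. At $\chi=\chi_{100}$ we have $\lambda_*=1>0$, so the entry characterization forces $\chi_{100}>\chi_0$. Conversely, at $\chi=\chi_0$ we have $\lambda_*=0$, which is incompatible with $\lambda_*=1$ unless $x_A=y$. It then follows that for $\chi\in(\chi_0,\chi_{100})$ neither endpoint is optimal, so $\lambda_*\in(0,1)$. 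Uniqueness of $\lambda_*$ is taken from Proposition \ref{prop:optimal-tech-use}.

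The main obstacle is the degenerate configuration in which $x_A$ is itself parallel to $t$. Then $p_{100}=p_A$, $\chi_0=\chi_{100}=1/g(t)$, and at that capability $y=x_A$, so $\lambda$ is not identified. I would handle this by noting that the decomposition in Proposition \ref{prop:optimal-tech-use} tacitly requires $y\neq x_*$, and by stating Case 3 under that proviso (or reading the strict inequality as holding whenever $t\not\propto x_A$).

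A second, milder point is the first-order characterization when $g$ is not differentiable. This causes no difficulty, because only the linear functional $p\mapsto B\varrho(p)$ on $X_A$ enters the argument, never $\nabla g$. Finally, boundary points with zero components, where $F$ is only assumed continuous, need the gradient condition applied at $y$. If $t$ has zero entries, $\nabla F(t)$ may not exist, and I would argue by a limiting argument along $t_\varepsilon\to t$ in $\mathbb R^N_{++}$ or simply assume $F$ is differentiable there.
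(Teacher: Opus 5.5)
Your argument is correct, but it is organized differently from the paper's.

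\textbf{The paper's route.} The paper works along the segment from the optimal human bundle to $y$. It uses the univariate function $f(\lambda)=F[(1-\lambda)x_*+\lambda y]$ and its strict concavity, and reads off entry and corner as $f'(0)>0\iff\chi>\chi_0$ and $f'(1)<0\iff\chi<\chi_{100}$. It then obtains $\chi_{100}>\chi_0$ from a separate threshold-ordering lemma. Normalizing $B=1$, that lemma places the break-even capability $c=F(x_A)/F(t)$, at which going all-in exactly reproduces autarky output, strictly between the two thresholds. It does so using the strict supporting-hyperplane inequalities $F(t)<\nabla F(x_A)'t$ and $F(x_A)<\nabla F(t)'x_A$, together with $\chi_{100}\ge\nabla F(t)'x_A/F(t)$.

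\textbf{Your route.} You instead characterize optimality over all of $X_T$ at the two candidate endpoints $x_A$ and $y$, via the variational inequality. This is what the corner condition actually requires: $y$ must beat every bundle in $X_A$ at prices $p_{100}$, hence the appearance of $\varrho(p_{100})$. Comparing $y$ only with $x_*$ is not enough, since $x_*$ moves with $\chi$ and is not pinned down when $\lambda_*=1$.

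Your derivation of $\chi_{100}>\chi_0$ from uniqueness of the optimum is shorter than the lemma. At $\chi_{100}$ the optimum is $y$; if $\chi_{100}\le\chi_0$ it would also be $x_A$, forcing $y=x_A$. This also avoids the lemma's strict-concavity hypothesis, which no function homogeneous of degree one can satisfy. Moreover, your interior conclusion holds for every decomposition of $\tilde x_*$, so it does not actually rely on uniqueness of $\lambda_*$.

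\textbf{What each buys.} The paper's route gives, in exchange, the intermediate benchmark $\chi_0<c<\chi_{100}$, which has a direct economic reading. Both arguments need the same proviso $t\not\propto x_A$. Your treatment of the collinear case, where $\chi_0=\chi_{100}=1/g(t)$, is right.
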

\begin{proof}[Proof sketch]
Cases 1--2 follow from the boundary conditions of the univariate function $f(\lambda) = F[(1-\lambda)x_* + \lambda y]$. For case 3, the strict inequality $\chi_{100} > \chi_0$ follows from strict quasi-concavity: the supporting hyperplanes at $x_A$ and at the technology direction $t$ are distinct whenever $t$ is not collinear with $x_A$. See Appendix \ref{sec:proof-interior-adoption}.
\end{proof}

To compute these thresholds in practice, one proceeds in two steps. First, solve the autarky problem to obtain the autarky allocation \(x_A\) and its supporting normal \(p_A\) (under differentiability, \(p_A \propto \nabla g(x_A)\), normalized to unit length). Given \(p_A\), compute \(\chi_0 = \underline{\chi}(t,p_A) = \varrho(p_A)/(p_A't)\). Second, compute the technology-induced price \(p_{100}\) from the job's marginal rates of substitution at the technology direction (under differentiability, \(p_{100} \propto \nabla F(t)\), normalized). Then compute \(\chi_{100} = \underline{\chi}(t,p_{100})\). The gap \(\chi_{100} - \chi_0 > 0\) determines the width of the partial-adoption region.

Figure \ref{fig:interior} illustrates how to compute \(\chi_0\) and \(\chi_{100}\) in two dimensions. At capability \(\chi_0\) the technology vector is just sufficiently long to touch the budget line (hyperplane in \(N\) dimensions) spanned by the autarky prices at the production point \(x_A\). As capability grows, it will eventually reach a point where the production isoquant is parallel to the PPF at a new point \(x_{100}\). At this capability, production is maximized by using technology 100 percent of the time of the worker. As capability increases, this corner solution is going to remain, with the slope of the isoquant no longer pinning down technology usage.

The gap between \(\chi_{100}\) and \(\chi_0\) determines how long is the range of the ``intensive margin'' of technology usage. This depends on the curvature of both \(g\) and \(F\). For general curved PPFs, \(x_{100}\) is typically distinct from \(x_A\) and is ``directionally farther'' from the technology. This means that the last few human tasks that are done before full AI adoption are very different from what AI has to offer.

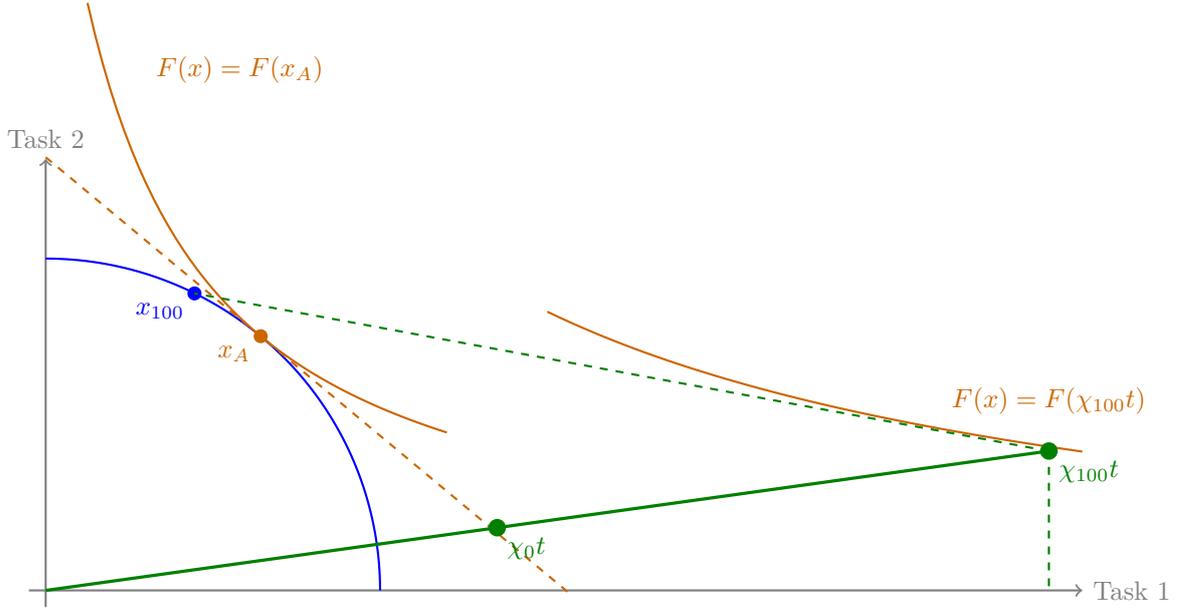
\begin{figure}[htbp]
\centering
\begin{tikzpicture}[scale=2.2]
\draw[->,thick,gray] (-0.1,0) -- (6.2,0) node[right] {\small Task 1};
\draw[->,thick,gray] (0,-0.1) -- (0,2.6) node[above] {\small Task 2};
\draw[thick,blue,domain=0:90,samples=80] plot ({2*cos(\x)},{2*sin(\x)});
\draw[thick,orange!80!black,domain=0.25:2.4,samples=50] plot (\x,{2.797/(\x+0.540)});
\draw[thick,orange!80!black,domain=3:6.2,samples=50] plot (\x,{4.8/(\x/1.45+0.540)-0.16});

\fill[orange!80!black] (1.286,1.532) circle (1.2pt);
\fill[blue] (0.89,1.79) circle (1.2pt);

\draw[thick,dashed,orange!80!black,domain=0:3.15] plot (\x,{-0.8391*\x+2.6108});

\draw[very thick,green!50!black] (0,0) -- (6.0,0.84);
\fill[green!50!black] (6.0,0.84) circle (1.5pt);
\fill[green!50!black] (2.7,0.378) circle (1.5pt);
\draw[thick,dashed,green!50!black] (6.0,0.84) -- (6.0,0);
\draw[thick,dashed,green!50!black] (6.0,0.84) -- (0.89,1.79);

\node[orange!80!black,above right] at (0.6,3.0) {\small $F(x)= F(x_A)$};
\node[orange!80!black,above] at (6.0,1.0) {\small $F(x)= F(\chi_{100}t)$};
\node[green!50!black,below right] at (2.7,0.378) {\small $\chi_{0} t$};
\node[green!50!black,below right] at (6.0,0.84) {\small $\chi_{100} t$};
\node[blue,below left] at (0.89,1.79) {\small $x_{100}$};
\node[orange!80!black,below left] at (1.286,1.532) {\small $x_{A}$};

\end{tikzpicture}
\caption{The limits of interior technology adoption}
\label{fig:interior}
\end{figure}

For technology adoption, the worker's skills and job requirements matter only through shadow prices. The autarky prices \(p_A\) summarize which task bundles are scarce at the margin in the current worker--job match: they encode the relevant interaction between the resource use function \(g\) (skills) and the production function \(F\) (requirements). The adoption condition in Theorem \ref{thm:adoption} depends on \(g\) and \(F\) only through \(p_A\) and the induced unit revenue \(\varrho(p_A)\), and the all-in threshold depends on \(F\) only through \(p_{100}\). As a result, two matches that generate the same \(p_A\) face the same cone of adopted technology directions for any given capability \(\chi\).

This sufficient-statistic property is the reason the worker--job match is the relevant unit of analysis and why one-dimensional exposure or skill rankings can misclassify adoption: what matters is the direction of shadow prices, not a scalar summary. We return to this interpretation in the Introduction.

\subsection{More than one technology}\label{more-than-one-technology}

\label{sec:multi-technology}

Workers rarely face a single tool in isolation: multiple technologies coexist, and adopting one reshapes the feasible set and the marginal value of tasks. This matters because adoption decisions are value comparisons at shadow prices. When a worker adopts an initial technology, the implied prices rotate, which can make a second technology more or less attractive even if its direction and capability are unchanged. This section extends the adoption and shadow-price logic to settings where several directional technologies are available simultaneously.

We proceed by induction on the number of technologies. The key insight is that after adopting \(K\) technologies, the expanded feasible set \(X_{T,K}\) retains the properties of the autarky PPS---convexity, compactness, free disposal---so the adoption logic for technology \(K+1\) is identical in structure to the single-technology case, with \(X_{T,K}\) replacing \(X_A\) and the current shadow prices \(p_K^*\) replacing \(p_A\).

\begin{definition}[Multi-technology PPS]
\label{def:multi-tech-pps}
Given technologies $(t_1,\chi_1),\ldots,(t_K,\chi_K)$ with technology points $y_k = B\chi_k t_k$, the \emph{$K$-technology PPS} is
$$
X_{T,K} = \left\{z \in \mathbb R^N_+ : z \le x_H + \sum_{k=1}^K \lambda_k y_k,\; g(x_H) \le \left(1-\sum_{k=1}^K \lambda_k\right)B,\; \lambda_k \ge 0,\; \sum_{k=1}^K \lambda_k \le 1\right\}.
$$
Under our assumptions on $g$, $X_{T,K} = \text{conv}(X_A \cup \{y_1,\ldots,y_K\})$ with free disposal (Proposition \ref{prop:convex-hull} applied inductively). See Appendix \ref{sec:proof-multi-tech}.
\end{definition}

\begin{proposition}[Inductive adoption structure]
\label{prop:inductive-adoption}
Let $\tilde x_K^*$ denote the optimum over $X_{T,K}$ and $p_K^* = \nabla F(\tilde x_K^*)/\|\nabla F(\tilde x_K^*)\|$ the associated shadow price. Then:
\begin{enumerate}
\item \textbf{Entry:} Technology $K+1$ with direction $t_{K+1}$ is adopted if and only if $\chi_{K+1} > \underline{\chi}(t_{K+1}, p_K^*)$, where $\underline{\chi}(t,p) = \varrho_K(p)/(p't)$ and $\varrho_K(p) = \max_{z \in X_{T,K}} p'z / B$ is the unit revenue over the $K$-technology PPS.
\item \textbf{Rising bar:} $\varrho_K(p) \ge \varrho_{K-1}(p)$ for all $p$ (the maximum over a larger set is weakly larger), with strict inequality for any $p$ such that one of $y_1,\ldots,y_K$ lies on the profitable side of the supporting hyperplane at $p$. Hence the entry threshold for technology $K+1$ is weakly higher than for technology $K$: each successive technology faces a tougher adoption bar.
\item \textbf{All-in:} The worker goes all-in on technology $K+1$ if $\chi_{K+1} > \varrho_K(p_{100,K+1})/p_{100,K+1}'t_{K+1}$ and $\chi_{K+1}/\chi_k > (p_{100,K+1}'t_k)/(p_{100,K+1}'t_{K+1})$ for all $k=1,\ldots,K$, where $p_{100,K+1} = \nabla F(t_{K+1})/\|\nabla F(t_{K+1})\|$.
\end{enumerate}
\end{proposition}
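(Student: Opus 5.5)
The plan is to reduce each part to the single-technology results, with $X_{T,K}$ in place of $X_A$. The first step is to record that $X_{T,K}$ inherits what the earlier proofs used about $X_A$. It is convex and compact, being the convex hull of a compact convex set and finitely many points, with free disposal. Its revenue function $\varrho_K(p)=\max_{z\in X_{T,K}}p'z/B$ is homogeneous of degree one and convex. Adding technology $K+1$ gives $X_{T,K+1}=\mathrm{conv}(X_{T,K}\cup\{y_{K+1}\})$ with free disposal. This follows by applying the argument of Proposition~\ref{prop:convex-hull} to the representation in Definition~\ref{def:multi-tech-pps}: split the resource $(1-\lambda_{K+1})B$ among the first $K$ uses and rescale, exactly as homogeneity of $g$ was used before.

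For \textbf{Entry}, I would rerun the proof of Theorem~\ref{thm:adoption} with $X_{T,K}$ as the base set. Under Assumption~\ref{ass:production} the optimum $\tilde x_K^*$ is unique, and the hyperplane with normal $p_K^*$ supports both $X_{T,K}$ and the upper contour set of $F$ at $\tilde x_K^*$. The analogue of Proposition~\ref{prop:autarky-prices} gives $p_K^{*\prime}\tilde x_K^*=B\varrho_K(p_K^*)$. The new hull contains a point with higher output if and only if $y_{K+1}$ lies strictly on the profitable side of that hyperplane.

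\begin{itemize}
\item In the ``if'' direction, the segment from $\tilde x_K^*$ toward $y_{K+1}$ has positive directional derivative $\nabla F\cdot(y_{K+1}-\tilde x_K^*)>0$.
\item In the ``only if'' direction, the hull stays inside the half-space, and the half-space contains no point with output above $F(\tilde x_K^*)$.
\end{itemize}

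Dividing $B\chi_{K+1}p_K^{*\prime}t_{K+1}>B\varrho_K(p_K^*)$ by $B\,p_K^{*\prime}t_{K+1}$ gives the stated threshold. One caveat: if $\tilde x_K^*$ lies on the boundary of $\mathbb R^N_+$, the gradient may not be the unique support. In that case I would assume interiority, as the statement does implicitly by defining $p_K^*$ via $\nabla F$.

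For \textbf{Rising bar}, monotonicity is immediate because $X_{T,K-1}\subseteq X_{T,K}$. For strictness, if $p'y_k>B\varrho_{K-1}(p)$ for some $k\le K$, then $y_k\in X_{T,K}$ gives $\varrho_K(p)\ge p'y_k/B>\varrho_{K-1}(p)$. This is the one point where I must read ``profitable side'' relative to $X_{T,K-1}$'s own supporting hyperplane. The conclusion about thresholds holds at a fixed price $p$. Comparing thresholds at $p_K^*$ versus $p_{K-1}^*$ is not implied, so the proof should state explicitly that the comparison is made at common prices.

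For \textbf{All-in}, I would check that $\tilde x=y_{K+1}$ is optimal over $X_{T,K+1}$ using the supporting hyperplane with normal $p_{100,K+1}$, which is the $F$-normal at $y_{K+1}$ by homogeneity of degree one. By quasi-concavity, it suffices that $p_{100}'z\le p_{100}'y_{K+1}$ for all $z\in X_{T,K+1}$. Since this set is a hull with free disposal and $p_{100}\ge0$, it is enough to check the following generating points:

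\begin{itemize}
\item \emph{Points of $X_A$.} These satisfy the bound because $p_{100}'y_{K+1}\ge B\varrho_0(p_{100})$. The stated condition uses $\varrho_K$, which is stronger and in fact already covers the next item.
\item \emph{The technology points $y_k$.} Here one needs $\chi_k p_{100}'t_k<\chi_{K+1}p_{100}'t_{K+1}$, which is the second condition.
\end{itemize}

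Strict inequalities give uniqueness, so $\lambda_{K+1}=1$. This mirrors case~3 of Theorem~\ref{thm:interior-adoption}.

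The main obstacle is the Entry step. The delicate part is justifying that a supporting hyperplane of a merely convex polytope-like set $X_{T,K}$, which has kinks at the $y_k$, is still pinned down by $\nabla F$. That requires an interior optimum, which I would either assume or secure via a boundary argument.
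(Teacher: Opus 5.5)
Your proposal is correct and follows the paper's route: Entry by rerunning Theorem~\ref{thm:adoption} with $X_{T,K}$ in place of $X_A$, Rising bar from $X_{T,K-1}\subseteq X_{T,K}$, and All-in by checking the generating points of $X_{T,K+1}$ against $p_{100,K+1}$, as in Appendix~\ref{sec:proof-all-in-K}. Your caveats are valid and sharper than the paper's sketch: the threshold comparison only holds at common prices, and the $\varrho_K$-condition already makes the second all-in condition redundant. The ``obstacle'' you flag is also milder than you fear: the first-order condition $\nabla F(\tilde x_K^*)'(z-\tilde x_K^*)\le 0$ for all $z\in X_{T,K}$ makes the $\nabla F$-hyperplane supporting regardless of kinks, so all you need is $\tilde x_K^*\in\mathbb R^N_{++}$.
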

\begin{proof}
Part 1 follows from applying Theorem \ref{thm:adoption} with $X_{T,K}$ in place of $X_A$: the entry condition is $p_K^{*\prime} y_{K+1} > p_K^{*\prime} \tilde x_K^* = B\varrho_K(p_K^*)$. Part 2: $X_{T,K} \supseteq X_{T,K-1}$, so the support function (maximum over feasible bundles) is weakly larger; strictly so when the new technology point expands the set in direction $p$. Part 3 applies the all-in dominance logic of Theorem \ref{thm:interior-adoption} to $X_{T,K}$: at prices $p_{100,K+1}$, no element of $X_{T,K}$---neither human bundles nor previous technology points---delivers higher value. See Appendix \ref{sec:proof-all-in-K}.
\end{proof}

The all-in condition has a simple interpretation. Evaluate all feasible alternatives at the marginal value of output in the neighborhood of going all-in on the new tool, summarized by \(p_{100,K+1}\). The worker chooses \(y_{K+1}\) if, at these prices, the new technology delivers strictly higher value than any purely human bundle and than any previously available technology point. In other words, near \(y_{K+1}\) there is no competitive human or older-technology substitute.

\section{A parametric illustration}\label{sec:parametric}

This section provides a tractable parametric specialization of the general model. We model job requirements with a constant elasticity of substitution (CES) production function and skills with a constant elasticity of transformation (CET) resource use function. CES means that tasks are substitutable inputs in producing output, with elasticity parameter \(\sigma\). CET means that the worker can transform a fixed resource budget into different task bundles with constant elasticity \(\gamma\); it is a convenient parametric way to represent the Production Possibility Set.

The goal of the CES--CET specialization is to make the autarky shadow-price vector \(p^A\) operational by expressing it as a function of primitives: job requirements \(\theta\) and skills \(s\). Because adoption decisions depend on \(p^A\) (Theorems \ref{thm:adoption} and \ref{thm:cone-adoption}), this mapping turns the qualitative geometry of Section \ref{sec:model} into closed-form comparative statics.

\subsection{CES job requirements and CET skills}\label{ces-job-requirements-and-cet-skills}

\begin{definition}[CES--CET parametric environment]
\label{def:ces-cet}
The \emph{CES--CET environment} specializes the general model (Assumptions \ref{ass:resource-use} and \ref{ass:production}) as follows.

\textbf{Job requirements (CES).} The production function is constant elasticity of substitution (CES):
$$
F(x) = \left(\sum_{i=1}^N \theta_i^{1/\sigma}\, x_i^{(\sigma-1)/\sigma}\right)^{\sigma/(\sigma-1)},
$$
where $\theta = (\theta_1,\ldots,\theta_N)$ with $\theta_i > 0$ captures job requirements (task $i$'s value weight) and $\sigma > 0$ is the elasticity of substitution across tasks. $F$ satisfies constant returns, strict quasi-concavity, and $C^2$ differentiability on $\mathbb R^N_{++}$.

\textbf{Skills (CET).} The resource use function is constant elasticity of transformation (CET) \citep{PowellGruen1968}:
$$
g(x) = \left(\sum_{i=1}^N s_i^{-1/\gamma}\, x_i^{(\gamma+1)/\gamma}\right)^{\gamma/(\gamma+1)},
$$
where $s = (s_1,\ldots,s_N)$ with $s_i > 0$ captures skills (higher $s_i$ means lower resource cost for task $i$) and $\gamma > 0$ is the elasticity of transformation across tasks. $g$ is convex, homogeneous of degree one, increasing, and $g(0)=0$. See Appendix \ref{sec:ces-cet-derivations}.
\end{definition}

The parameters have straightforward economic content. The vector \(\theta\) captures job requirements: higher \(\theta_i\) means task \(i\) is more valuable for producing output. The vector \(s\) captures skills: higher \(s_i\) means task \(i\) uses fewer resources at the margin. The curvature parameters capture flexibility. A higher substitution elasticity \(\sigma\) makes tasks more interchangeable in production (a more flexible job), while a higher transformation elasticity \(\gamma\) makes it easier for the worker to reallocate effort across tasks (a more flexible skill set).

\subsection{Autarky allocation in closed form}\label{autarky-allocation-in-closed-form}

We begin with the autarky allocation problem (Definition \ref{def:task-allocation}). Under the CES--CET environment, the optimizer admits a closed-form expression. Define the CES--CET productivity index \[
\Phi(\theta,s) = \sum_{j=1}^N s_j^{(\sigma-1)/(\gamma+\sigma)}\, \theta_j^{(\gamma+1)/(\gamma+\sigma)}.
\] Then the autarky allocation takes the form \[
x_i^A = B\cdot \frac{s_i^{\sigma/(\gamma+\sigma)}\,\theta_i^{\gamma/(\gamma+\sigma)}}{\Phi^{\gamma/(\gamma+1)}},\qquad i=1,\ldots,N,
\] and autarky output per unit budget satisfies \[
\frac{Y^A}{B} = \frac{F(x^A)}{B} = \Phi^{\sigma/(\sigma-1)}.
\] See Appendix \ref{sec:ces-cet-derivations} for derivations.

The allocation formula shows that the worker concentrates activity where skills and requirements overlap. The exponents make the role of curvature transparent. When \(\gamma\) is high, the worker can freely reallocate across tasks, so they tilt toward what the job requires (higher weight on \(\theta_i\)). When \(\sigma\) is high, tasks are easily substitutable in production and requirements matter less, so the worker leans toward tasks where they are skilled (higher weight on \(s_i\)).

The productivity index \(\Phi\) has a simple alignment interpretation: output is high when the worker is relatively skilled in the tasks that the job values (roughly, when \(s_i\) is high where \(\theta_i\) is high). Misalignment lowers \(\Phi\) by forcing the worker to spend scarce resources on tasks that are both required and expensive to execute.

\subsection{Shadow prices as functions of job requirements and skills}\label{shadow-prices-as-functions-of-job-requirements-and-skills}

In the general model, autarky prices \(p^A\) are defined by tangency between the Production Possibility Frontier and an isoquant. Under CES--CET, they take a simple power-law form: \[
p_i^A \propto \left(\frac{\theta_i}{s_i}\right)^{1/(\gamma+\sigma)},\qquad \|p^A\|_2 = 1.
\] See Appendix \ref{sec:ces-cet-derivations}.

This formula clarifies the economic content of shadow prices. A task has high marginal value when it is important for output (high \(\theta_i\)) and costly to execute for the worker (low \(s_i\)). The relevant object is therefore the ratio \(\theta_i/s_i\): the gap between requirements and ease of execution.

The distinction between \(x^A\) and \(p^A\) is central for interpreting observed behavior. A worker may spend substantial time on a task either because it is required (high \(\theta_i\)) or because it is easy (high \(s_i\)). Shadow prices separate these forces: \(p_i^A\) is high precisely when the job demands the task but the worker finds it expensive. This is the key statistic for valuing a directional technology.

\subsection{Comparative statics inside the job (Jevons Paradox)}\label{comparative-statics-inside-the-job-jevons-paradox}

The closed forms make it easy to describe within-job reallocation. Define the autarky time (or activity) share of task \(i\) as \[
\omega_i = \frac{s_i^{-1/\gamma}x_i^{1+1/\gamma}}{\sum_{j=1}^N s_j^{-1/\gamma}x_j^{1+1/\gamma}}.
\] Substituting the allocation formula above, all common scaling terms cancel and we obtain \[
\omega_i^A
= \frac{s_i^{(\sigma-1)/(\gamma+\sigma)}\,\theta_i^{(\gamma+1)/(\gamma+\sigma)}}{\Phi}.
\]

An increase in skill \(s_i\) raises \(\omega_i^A\) if and only if the tasks are substitutes in the production function, \(\sigma>1\). This is Jevons' Paradox for knowledge work. Someone skilled at a task will spend \emph{more time} with it, not \emph{less}, when demand for that task is elastic. By contrast, when tasks are complement, each task has to be completed to some predefined level, and skilled workers spend less time on satisfying the job requirement. This also highlights that activities, in general, are poor measures of skills.

\subsection{Applying the adoption theorems under CES--CET}\label{applying-the-adoption-theorems-under-cescet}

We now express the key statements about technology adoption in this parametric environment. Theorems \ref{thm:adoption} and \ref{thm:cone-adoption} show that adoption depends on (i) autarky prices \(p^A\) (through alignment \(p^A\cdot t\)) and (ii) the unit revenue function \(\varrho(p^A)\). Under CES--CET, both objects can be expressed in closed form as functions of \((\theta,s,\sigma,\gamma)\).

\begin{proposition}[Adoption objects under CES--CET]
\label{prop:ces-cet-adoption}
Under the CES--CET environment (Definition \ref{def:ces-cet}), let $\Phi = \sum_j s_j^{(\sigma-1)/(\gamma+\sigma)} \theta_j^{(\gamma+1)/(\gamma+\sigma)}$:
\begin{enumerate}
\item \textbf{Unit revenue.} The unit revenue function is the dual of the CET:
$$
\varrho(p) = \left(\sum_{i=1}^N s_i\, p_i^{\gamma+1}\right)^{1/(\gamma+1)}.
$$
\item \textbf{Entry threshold.} In terms of primitives $(\theta, s, \sigma, \gamma, t)$:
$$
\chi_0(t) = \frac{\Phi^{1/(\gamma+1)}}{\sum_i t_i (\theta_i/s_i)^{1/(\gamma+\sigma)}}.
$$
The numerator $\Phi^{1/(\gamma+1)}$ measures the worker's autarky productivity: it is high when skills align with requirements. The denominator measures alignment of autarky prices with the technology direction: it is high when $t_i$ loads on tasks where $\theta_i/s_i$ is large (required but hard). Better-aligned technology faces a lower entry bar. 
\item \textbf{Corner threshold.} The corner threshold can be written as
$$
\chi_{100}(t) = \frac{\left(\sum_i s_i\, \theta_i^{(\gamma+1)/\sigma}\, t_i^{-(\gamma+1)/\sigma}\right)^{1/(\gamma+1)}}{\sum_i \theta_i^{1/\sigma}\, t_i^{1-1/\sigma}}.
$$
The numerator depends on $(s, \theta, \sigma, \gamma, t)$: the worker's unit revenue at technology-induced prices. The denominator depends on $(\theta, \sigma, t)$ only: how much value the technology direction generates at its own induced prices. Skills $s$ enter only the numerator: a more skilled worker has higher $\chi_{100}$ (harder to push to all-in).
\end{enumerate}
See Appendix \ref{sec:ces-cet-derivations} for derivations.
\end{proposition}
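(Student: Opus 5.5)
The plan is a direct computation. Everything rests on two facts already available: the adoption threshold $\underline{\chi}(t,p)=\varrho(p)/(p't)$ is homogeneous of degree zero in $p$ (Definition~\ref{def:adoption-threshold}), and the closed forms for $p^A$ and $\Phi$ from the preceding subsections. Degree-zero homogeneity lets me work throughout with \emph{unnormalized} price vectors, which removes all the unit-length normalizations from the algebra.

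\textbf{Part 1.} I will recognize the CET function as a weighted $\ell_q$ norm with $q=(\gamma+1)/\gamma$. Since $s_i^{-1/\gamma}x_i^{(\gamma+1)/\gamma}=\bigl(s_i^{-1/(\gamma+1)}x_i\bigr)^{q}$, we have $g(x)=\|D^{-1}x\|_q$ with $D=\mathrm{diag}\bigl(s_i^{1/(\gamma+1)}\bigr)$. Then
\[
\varrho(p)=\max_{\|D^{-1}x\|_q\le 1}p'x=\max_{\|u\|_q\le1}(Dp)'u=\|Dp\|_r ,
\]
where $r$ is the conjugate exponent, $1/q+1/r=1$, so $r=\gamma+1$. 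This gives $\varrho(p)=\bigl(\sum_i s_i p_i^{\gamma+1}\bigr)^{1/(\gamma+1)}$ for $p\ge 0$. The maximizer is nonnegative, so the restriction to $\mathbb R^N_+$ is harmless. As a cross-check, I would also verify this by writing the Lagrangian condition $p_i=\mu\,\partial g/\partial x_i$ and substituting back.

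\textbf{Part 2.} I plug the unnormalized autarky price $p_i=(\theta_i/s_i)^{1/(\gamma+\sigma)}$ into Part 1. This gives
\[
\varrho^{\gamma+1}=\sum_i s_i^{1-(\gamma+1)/(\gamma+\sigma)}\theta_i^{(\gamma+1)/(\gamma+\sigma)} .
\]
Because $1-(\gamma+1)/(\gamma+\sigma)=(\sigma-1)/(\gamma+\sigma)$, this sum is exactly $\Phi$. Dividing by $p't=\sum_i t_i(\theta_i/s_i)^{1/(\gamma+\sigma)}$ yields the stated formula for $\chi_0$. If I want the argument self-contained rather than relying on the appendix for $p^A$, I would re-derive it from the tangency $\nabla F(x^A)\propto\nabla g(x^A)$ using the closed form for $x^A$. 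In that case the only work is checking that the exponents of $s_i$ and $\theta_i$ collapse to $-1/(\gamma+\sigma)$ and $1/(\gamma+\sigma)$.

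\textbf{Part 3.} I use $p_{100}\propto\nabla F(t)$. For CES, $\partial F/\partial x_i=F(x)^{1/\sigma}\theta_i^{1/\sigma}x_i^{-1/\sigma}$, and the common factor $F(t)^{1/\sigma}$ drops out by degree-zero homogeneity. So I take $p_i=\theta_i^{1/\sigma}t_i^{-1/\sigma}$. Part 1 then gives the numerator $\bigl(\sum_i s_i\theta_i^{(\gamma+1)/\sigma}t_i^{-(\gamma+1)/\sigma}\bigr)^{1/(\gamma+1)}$, and $p't=\sum_i\theta_i^{1/\sigma}t_i^{1-1/\sigma}$ gives the denominator. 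The interpretive claims are then immediate: $s$ appears only in the numerator, and the numerator is strictly increasing in each $s_i$, so $\chi_{100}$ rises with skill.

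\textbf{Main obstacle.} The step I expect to need care is not the algebra but the domain of Part 3. The gradient formula requires $t\in\mathbb R^N_{++}$, whereas technology directions may have zero components. When some $t_i=0$ and $\sigma<1$, we have $F(t)=0$, and Theorem~\ref{thm:interior-adoption} already says the worker never goes all-in, consistent with $\chi_{100}=\infty$. When $\sigma>1$, the $t_i^{-(\gamma+1)/\sigma}$ term blows up, and I would interpret the formula as a limit, or restrict the statement to strictly positive $t$. I would state this restriction explicitly rather than try to extend the formula.
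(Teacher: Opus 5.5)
Your proposal is correct and follows the paper's appendix derivation essentially step for step: identify $\varrho$ as the dual ($\ell_{\gamma+1}$) norm of the CET, then substitute the unnormalized prices $(\theta_i/s_i)^{1/(\gamma+\sigma)}$ and $\theta_i^{1/\sigma}t_i^{-1/\sigma}$, with degree-zero homogeneity of $\underline{\chi}$ doing the work of the paper's ``the $\|\xi\|$ terms cancel.'' Two things in your version are more careful than the appendix, which only asserts the dual-norm form and does not mention zero components of $t$: your explicit change-of-variables and H\"older argument for Part 1, and your caveat that the $\chi_{100}$ formula needs $t\in\mathbb R^N_{++}$ or a limiting interpretation (under which it correctly gives $\chi_{100}=\infty$).
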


The next corollary derives some comparative statics of the adoption thresholds with respect to the curvature of the PPF and the production function. The first statement says that when jobs or skills are more flexible, even a less aligned technology can be productive. Adoption is faster in flexible jobs like general knowledge work (e.g., software, consulting, academia, journalism), and slower in very rigid jobs requiring highly specialized skills (e.g.~medicine) or exact satisfaction of job requirements (e.g., law, medicine, nuclear physics, aviation). The second statement says that flexible jobs not only adopt early, but are also more likely to go ``all in.'' For more rigid jobs, we will see a wider range of interior AI usage with \(\lambda\in(0,1)\).

\begin{corollary}[Curvature and adoption]
\label{cor:curvature-adoption}
Fix autarky productivity $\Phi$, job requirements $\theta$, skills $s$, and technology capability $\chi$. Then:
\begin{enumerate}
\item \emph{Cone width increases with flexibility.} The cone of adopted technology directions is wider when $\gamma + \sigma$ is large. In the limit $\gamma + \sigma \to \infty$, the cone converges to a hemisphere (all directions with positive alignment are adopted).
\item \emph{Partial-adoption region narrows with flexibility.} The ratio $\chi_{100}/\chi_0$ is decreasing in $\gamma$ and in $\sigma$. In the limit, adoption becomes all-or-nothing: either $\lambda_* = 0$ or $\lambda_* = 1$.
\end{enumerate}
\end{corollary}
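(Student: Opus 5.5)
Both parts reduce to explicit CES--CET formulas, so the plan is to compute the adoption objects in closed form and then read off monotonicity in the curvature parameters.

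\textbf{Part 1 (cone width).} By Theorem \ref{thm:cone-adoption}, the cone has half-angle $\phi_0=\arccos[\varrho(p^A)/\chi]$, so it widens exactly when $\varrho(p^A)$ falls.

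First, substitute $p_i^A\propto(\theta_i/s_i)^{1/(\gamma+\sigma)}$ into the unit revenue of Proposition \ref{prop:ces-cet-adoption}:
$$
\varrho(p^A)=\frac{\bigl(\sum_i s_i(\theta_i/s_i)^{(\gamma+1)/(\gamma+\sigma)}\bigr)^{1/(\gamma+1)}}{\bigl(\sum_i(\theta_i/s_i)^{2/(\gamma+\sigma)}\bigr)^{1/2}}
=\frac{\Phi^{1/(\gamma+1)}}{\|(\theta/s)^{1/(\gamma+\sigma)}\|_2}.
$$
This matches the entry threshold in Proposition \ref{prop:ces-cet-adoption}.

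Second, hold $\Phi$ fixed, as the statement does. The comparison then runs through the denominator. As $\gamma+\sigma\to\infty$, the vector $(\theta_i/s_i)^{1/(\gamma+\sigma)}$ tends to the all-ones vector. Hence $p^A\to \mathbf 1/\sqrt N$, and the normalization interacts with $\Phi$.

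The cleaner route is to read the claim through the threshold itself. Let $\kappa=\gamma+\sigma$ and write
$$
\chi_0(t)=\frac{\varrho(p^A)}{p^{A\prime}t}.
$$
I would show that the ratio of $\varrho(p^A)$ to the minimal $\varrho$ over directions shrinks as the exponent $1/\kappa$ flattens the price vector. The flattening pushes the dispersion of $\varrho/p't$ toward the value $p't$ alone. In the limit the adoption condition becomes $p^{A\prime}t>0$ up to scaling, i.e.\ the hemisphere.

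I expect this step to be the main obstacle. Fixing $\Phi$ while varying $\gamma,\sigma$ is only loosely defined, so I would first make the normalization precise. My choice is to rescale $B$, or $s$ proportionally, so that autarky productivity is unchanged. I would then apply a power-mean inequality: the $\ell_{\gamma+1}$-type mean in the numerator, relative to the $\ell_2$ norm in the denominator, is monotone in the exponent $1/\kappa$. That monotonicity is what delivers $d\phi_0/d\kappa>0$.

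\textbf{Part 2 (partial-adoption region).} Form the ratio $\chi_{100}/\chi_0$ from the two explicit formulas in Proposition \ref{prop:ces-cet-adoption}. I would write both as H\"older-type means of the vectors $\theta$, $s$, $t$ with exponents depending on $\gamma$ and $\sigma$.

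By Theorem \ref{thm:interior-adoption}, the ratio is $\ge 1$. It equals $1$ exactly when $p_{100}=p^A$, that is, when the technology-induced prices coincide with autarky prices.

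Next, differentiate $\log(\chi_{100}/\chi_0)$ in $\sigma$ and in $\gamma$. Each derivative should take the form of a covariance, or equivalently a Jensen gap, of $\log(\theta_i/(s_i t_i^{\cdot}))$ under weights proportional to the summands. The sign then follows from the standard fact that $r\mapsto \log\bigl(\sum w_i a_i^r\bigr)^{1/r}$ is increasing.

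For the limit, let $\gamma,\sigma\to\infty$. Both $p^A$ and $p_{100}\propto(\theta_i/t_i)^{1/\sigma}$ converge to the same flat vector, so the ratio tends to $1$. The interval $(\chi_0,\chi_{100})$ then collapses, and adoption becomes all-or-nothing, $\lambda_*\in\{0,1\}$.
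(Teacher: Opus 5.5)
There is a genuine gap, and part of it comes from the statement itself. Your reduction of Part 1 to $\varrho(p^A)$ is correct, and so is your formula $\varrho(p^A)=\Phi^{1/(\gamma+1)}/\|\xi\|_2$. Your limit steps are essentially the paper's own sketch, which flattens $p^A$ and $p_{100}$.

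\textbf{Part 1.} The hemisphere limit drops the threshold. Adoption requires $p^{A\prime}t>\varrho(p^A)/\chi$ with $\chi$ fixed. Since $\|\xi\|_2\to\sqrt N$, this threshold tends to $\Phi^{1/(\gamma+1)}/(\chi\sqrt N)>0$, or to $1/(\chi\sqrt N)$ if $\gamma\to\infty$, not to $0$. The limiting half-angle is therefore $\arccos\bigl(\Phi^{1/(\gamma+1)}/(\chi\sqrt N)\bigr)<\pi/2$. Because $p^{A\prime}t=\sum_i t_i/\sqrt N\ge 1/\sqrt N$ on nonnegative unit vectors, the limit cone contains every nonnegative direction only if $\chi>\Phi^{1/(\gamma+1)}$ (resp.\ $\chi>1$).

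The monotonicity step cannot be rescued by a power-mean inequality, because the claim is false under the stated hypotheses. Take $\theta=s=b\mathbf 1$. Then $p^A=\mathbf 1/\sqrt N$ and $\Phi=Nb$ for all $(\gamma,\sigma)$, so everything the statement fixes is fixed, and
\[
\varrho(p^A)=\frac{(Nb)^{1/(\gamma+1)}}{\sqrt N}.
\]
This does not depend on $\sigma$, so $\phi_0$ is not a function of $\gamma+\sigma$. It is strictly increasing in $\gamma$ when $Nb<1$, so the cone narrows as flexibility rises. The reason is that $g(x;cs)=c^{-1/(\gamma+1)}g(x;s)$. Changing $\gamma$ at fixed $s$ therefore also rescales the PPS relative to the fixed tool point $B\chi t$. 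The sign of this level effect depends on whether skills are large or small in the chosen units. Rescaling $B$ leaves it untouched, and rescaling $s$ violates the hypothesis.

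\textbf{Part 2.} The power-mean fact you cite gives the opposite sign in $\gamma$. With $\theta=s=\mathbf 1$ (so $\Phi=N$ and $p^A=\mathbf 1/\sqrt N$), the closed forms give
\[
\frac{\chi_{100}}{\chi_0}=\Bigl(\frac1N\sum_i\bigl(t_i^{-1/\sigma}\bigr)^{\gamma+1}\Bigr)^{1/(\gamma+1)}\,\frac{\sum_i t_i}{\sum_i t_i^{1-1/\sigma}}.
\]
This is a power mean of order $\gamma+1$ times a factor free of $\gamma$. It is therefore strictly increasing in $\gamma$ for every $t\in\mathbb R^N_{++}$ not proportional to $\mathbf 1$. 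As $\gamma\to\infty$ it tends to $(\min_i t_i)^{-1/\sigma}\sum_i t_i/\sum_i t_i^{1-1/\sigma}>1$.

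Structurally, $p_{100}\propto(\theta_i/t_i)^{1/\sigma}$ does not involve $\gamma$, so your limit argument works only through $\sigma\to\infty$. It also needs $t\in\mathbb R^N_{++}$. If some $t_i=0$, then either $F(t)=0$ or the CES marginal product of task $i$ at $t$ is infinite. In that case the worker never goes all in at any finite $\sigma$ ($\chi_{100}=\infty$), and the partial-adoption region never collapses.

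What your tools can establish is:
\begin{enumerate}
\item the collapse of $(\chi_0,\chi_{100})$ as $\sigma\to\infty$ for interior $t$;
\item a hemisphere limit that holds only under a condition on $\chi$.
\end{enumerate}
The monotonicity claims, and the all-or-nothing limit along $\gamma$ alone, have to be dropped or restricted. The paper's own proof is only the flattening heuristic, shares the omitted threshold in Part 1, and establishes no more than this.
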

\begin{proof}[Proof sketch]
(1) The autarky price satisfies $p_i^A \propto (\theta_i/s_i)^{1/(\gamma+\sigma)}$. As $\gamma + \sigma \to \infty$, the exponent $1/(\gamma+\sigma) \to 0$, so $p^A$ converges to the uniform direction. A more uniform price vector is positively aligned with a larger set of technology directions, widening the cone. (2) The thresholds $\chi_0$ and $\chi_{100}$ are evaluated at $p_A$ and $p_{100}$ respectively. Higher curvature compresses both price vectors toward uniformity, reducing the gap between them. See Appendix \ref{sec:proof-curvature-adoption}.
\end{proof}

\section{Discussion and Relation to the Literature}
\label{sec:discussion}

Task-based labor-market models have a long history. In contrast to much of this literature, and in line with international trade theory \citep{DixitNorman1980,HelpmanKrugman1985}, we model what workers can do as a convex production possibility set. This allows workers to flexibly adjust the tasks they supply in response not only to job requirements but also to technical change. Here we discuss how our approach differs from earlier models and how these differences affect our understanding of AI adoption.

Our framework rests on two key assumptions: \emph{task bundling} and \emph{worker flexibility}.

\textit{Task bundling} refers to models where jobs require multiple tasks that are not tradable outside the specific worker-firm relationship. Foundational skill-bundle models \citep{Welch1969,KatzMurphy1992,Lazear2009,AutorThompson2025} treat workers as endowed with multiple skills that firms combine with heterogeneous weights. Sorting models \citep{GathmannSchoenberg2010,LisePostelVinay2020,BaranyBuchinskyPapp2020,BaranyHolzheu2026} study how workers with multidimensional skills match to jobs with multidimensional requirements. \citet{Acemoglu2011,FortinLemieux2016,CavounidisLang2020,BenzellMyers2026} use such task bundling models to study recent waves of automation.

\textit{Worker flexibility} refers to whether workers can reallocate effort across tasks within a job. Models with flexibility include the routine-biased literature \citep{Autor2003,Acemoglu2011,Cortes2016,Bohm2020}, the automation-threshold models \citep{AcemogluRestrepo2018,AcemogluRestrepo2019,ChenLiang2026}, expertise compression \citep{AutorThompson2025}, and most recent AI papers \citep{Freund2025,AlthoffReichardt2026,GansGoldfarb2026,CavounidisEtAl2024,CavounidisLang2020}. Most models assume perfectly elastic reallocation across tasks, which, in our model, corresponds to \(\gamma=\infty\). This assumption, when tasks are not bundled results in market task prices, which are neither worker- nor job-specific. If tasks are bundled, each worker's shadow prices are pinned down by their relative skills rather than by match-specific marginal valuations, and thus are not job-specific.

Very few models work with finite worker flexibility among a set of bundled tasks. Among the papers we survey, only \citet{CavounidisLang2020} allow workers to reallocate across tasks at a positive but finite cost, leading to the same CET production possibilities set as in our model. Their focus, however, is dynamic skill formation rather than technology adoption.

Our paper gives a general characterization of the technology adoption problem for any degree of worker and job flexibility, thereby naturally embedding endowment models (\(\gamma=0\)), market-based task models (\(\sigma=\infty\)) and any intermediate case. To the best of our knowledge, we are also the first to model AI adoption as a team production problem that makes the combined feasible set convex.

\section{Conclusion}\label{conclusion}

We model AI adoption as collaboration under a shared time and attention budget. A worker's feasible task combinations form a convex production possibility set; an AI tool adds another such set with its own directional capabilities. Because using the tool requires supervision, the relevant object is the weighted sum of the worker and tool sets, with weights determined by time allocation. Adoption is therefore neither pure substitution nor pure augmentation: it is a team-production choice with an endogenous intensity.

This framing sharpens what can (and cannot) be predicted without context. The right sufficient statistic is the worker's autarky (shadow) prices at their pre-AI optimum: they summarize what is scarce at the margin in the worker--job match and translate the tool's multidimensional direction into a scalar value. Adoption occurs when the tool's direction is valuable at these prices. Because shadow prices are about marginal valuations, they need not coincide with observed task allocations, skills, or job requirements taken separately.

Two implications follow for empirical work on GenAI in knowledge jobs. First, match-specificity is central: neither ``which workers adopt'' nor ``which jobs are exposed'' is well-defined in isolation because the same tool can be valuable in one match and irrelevant in another. Second, adoption has structured margins. As capabilities improve, adoption expands along an extensive margin (the set of adopted directions) and an intensive margin (partial versus all-in use), so small capability improvements near the threshold can generate large changes in usage while leaving other matches unchanged. These features help explain why scalar exposure indices can misclassify both adoption and gains when the technological frontier is jagged.


\bibliographystyle{agsm}
\bibliography{references}


\appendix

\section{Proofs and Derivations}\label{sec:appendix}

\subsection{\texorpdfstring{Proof of Proposition \ref{prop:optimal-tech-use}}{Proof of Proposition }}\label{sec:proof-optimal-tech-use}

\emph{Uniqueness of \(\tilde x_*\).} Suppose \(\tilde x\) and \(\tilde x'\) are both optimal with \(\tilde x \neq \tilde x'\). Since \(X_T\) is convex, \(\bar x = \tfrac{1}{2}\tilde x + \tfrac{1}{2}\tilde x' \in X_T\). By strict concavity of \(F\), \(F(\bar x) > \tfrac{1}{2}F(\tilde x) + \tfrac{1}{2}F(\tilde x') = F(\tilde x)\), contradicting optimality. Hence \(\tilde x_*\) is unique.

\emph{Uniqueness of \(\lambda_*\).} Define \(h(\lambda) = \max_{x \in X_A} F((1-\lambda)x + \lambda y)\). For each fixed \(x \in X_A\), \(\lambda \mapsto F((1-\lambda)x + \lambda y)\) is strictly concave (composition of strictly concave \(F\) with an affine function of \(\lambda\)). The function \(h(\lambda)\) is strictly concave by a standard parametric optimization argument: take \(\lambda_1 \neq \lambda_2\) with maximizers \(x_1^*, x_2^*\); the two optimal bundles \((1-\lambda_i)x_i^* + \lambda_i y\) are distinct (since \(\tilde x_*\) is unique and varies with \(\lambda\)), so strict concavity of \(F\) gives \(h(\bar\lambda) > \mu h(\lambda_1) + (1-\mu)h(\lambda_2)\). Hence \(\lambda_*\) is unique, and \(x_* = (\tilde x_* - \lambda_* y)/(1-\lambda_*)\) when \(\lambda_* < 1\).

\emph{Uniqueness of \(p_T\).} Since \(F\) is \(C^2\), \(\nabla F(\tilde x_*)\) is well-defined and nonzero. The shadow price \(p_T = \nabla F(\tilde x_*)/\|\nabla F(\tilde x_*)\|\) is therefore unique. \(\square\)

\subsection{\texorpdfstring{Proof of Theorem \ref{thm:interior-adoption}}{Proof of Theorem }}\label{sec:proof-interior-adoption}

Define the univariate function \(f(\lambda) = F[(1-\lambda)x_* + \lambda y]\), which is strictly concave in \(\lambda\) on \([0,1]\).

\emph{Case 1.} If \(p_A't = 0\), then \(f'(0) = \nabla F(x_A)'(y - x_A) = B\chi \nabla F(x_A)'t - \nabla F(x_A)'x_A\). Since \(\nabla F(x_A) \propto p_A\) and \(p_A't = 0\), we have \(\nabla F(x_A)'t = 0\), so \(f'(0) < 0\) and \(\lambda_* = 0\).

\emph{Case 2.} If \(F(t) = 0\), then \(F(y) = F(B\chi t) = B\chi F(t) = 0\) by homogeneity. Going all-in yields zero output, while autarky gives positive output \(F(x_A) > 0\). Hence \(\lambda_* < 1\).

\emph{Case 3.} When \(p_A't > 0\) and \(F(t) > 0\), both \(\chi_0 = \underline{\chi}(t, p_A)\) and \(\chi_{100} = \underline{\chi}(t, p_{100})\) are finite. The entry condition \(f'(0) > 0\) holds iff \(\chi > \chi_0\), and the corner condition \(f'(1) < 0\) holds iff \(\chi < \chi_{100}\). By Lemma \ref{lem:threshold-ordering} below, \(\chi_{100} > \chi_0\) whenever \(t\) is not collinear with \(x_A\). For \(\chi \in (\chi_0, \chi_{100})\), we have \(f'(0) > 0\) and \(f'(1) < 0\), so by strict concavity \(\lambda_* \in (0,1)\). For \(\chi \ge \chi_{100}\), \(f'(1) \ge 0\), so \(\lambda_* = 1\). \(\square\)

\begin{lemma}[Threshold ordering]
\label{lem:threshold-ordering}
If $F$ is strictly concave, homogeneous of degree one, and the technology direction $t$ is not collinear with the autarky allocation $x_A$, then $\chi_{100} > \chi_0$.
\end{lemma}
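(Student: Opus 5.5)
The plan is to compare output at the two technology points $y_0 = B\chi_0 t$ and $y_{100} = B\chi_{100} t$. Homogeneity gives $F(B\chi t) = B\chi F(t)$. In the case where the ordering matters (Case 3 of Theorem \ref{thm:interior-adoption}) we have $F(t) > 0$, so output at the technology point is strictly increasing in $\chi$. It is therefore enough to show
$$
F(y_0) \le F(x_A) < F(y_{100}).
$$

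\emph{Left inequality.} By Definition \ref{def:adoption-threshold}, $y_0$ lies on the autarky supporting hyperplane: $p_A' y_0 = B\varrho(p_A) = p_A' x_A$. Part (ii) of Proposition \ref{prop:autarky-prices} says $x_A$ maximizes $F$ over the half-space $\{z : p_A' z \le p_A' x_A\}$. Since $y_0$ belongs to this half-space, $F(y_0) \le F(x_A)$. This route needs no differentiability at $x_A$. When $x_A \in \mathbb R^N_{++}$, one can equivalently use the concavity inequality $F(y_0) \le F(x_A) + \nabla F(x_A)'(y_0 - x_A) = F(x_A)$, with $\nabla F(x_A) \propto p_A$.

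\emph{Right inequality.} This is the step I expect to carry the content. For a concave function that is homogeneous of degree one, Euler's identity turns the gradient inequality at $y_{100}$ into $F(z) \le \nabla F(y_{100})' z$ for all $z$. By homogeneity of degree zero of $\nabla F$, $\nabla F(y_{100})$ is a positive multiple of $p_{100}$. By the definition of $\chi_{100} = \underline{\chi}(t, p_{100})$, every $x \in X_A$ satisfies
$$
p_{100}' x \le B\varrho(p_{100}) = p_{100}' y_{100}.
$$
Chaining these gives
$$
F(x_A) \le \nabla F(y_{100})' x_A \le \nabla F(y_{100})' y_{100} = F(y_{100}),
$$
where the last equality is Euler's identity.

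The remaining issue is strictness in the first link, and this is where both hypotheses enter. A concave HD1 function is linear along rays, so its gradient inequality can only be strict off the ray through $y_{100}$. I would argue as follows. Suppose $F(x_A) = \nabla F(y_{100})' x_A$. Then $F$ agrees with its tangent hyperplane at both $x_A$ and $y_{100}$. By concavity it then agrees with that hyperplane on the whole segment between them, so $F$ is affine there. This contradicts strict concavity off rays, given that $x_A$ is not collinear with $t$ (equivalently, not collinear with $y_{100}$).

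I would be careful to read "strictly concave" in the lemma as strictness along non-radial directions. Literal strict concavity is incompatible with homogeneity of degree one, so the non-collinearity hypothesis is exactly what makes this reading bite. Combining the two inequalities gives $B\chi_0 F(t) \le F(x_A) < B\chi_{100} F(t)$, hence $\chi_{100} > \chi_0$.
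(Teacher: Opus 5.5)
Your proof is correct and follows essentially the same route as the paper. Both arguments sandwich the intermediate level $F(x_A)/(BF(t))$, which is the paper's $c$ with $B=1$: your $F(y_0)\le F(x_A)<F(y_{100})$ is the paper's $\chi_0<c<\chi_{100}$ multiplied through by $BF(t)$, and the strict right-hand step uses the same gradient inequality at the technology direction combined with $p_{100}'x_A\le B\varrho(p_{100})$. Your departures are minor. You obtain the left-hand step, only weakly, from part (ii) of Proposition \ref{prop:autarky-prices} rather than from the gradient at $x_A$. You also state explicitly that ``strictly concave'' must be read as strict concavity off rays, which the paper uses without saying so.
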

\begin{proof}
Define $c = F(x_A)/F(t)$, so that $ct$ lies on the same isoquant as $x_A$. By strict concavity, the supporting hyperplane at $x_A$ lies strictly above $F(t)$: $F(t) < \nabla F(x_A)'t$. Using Euler's theorem $F(x_A) = \nabla F(x_A)'x_A$ and the definition $\chi_0 = F(x_A)/(\nabla F(x_A)'t)$, we have $c > \chi_0$. Similarly, the supporting hyperplane at $t$ lies strictly above $F(x_A)$: $F(x_A) < \nabla F(t)'x_A$. Since $\chi_{100} \ge (\nabla F(t)'x_A)/F(t)$ and $\nabla F(t)'x_A > F(x_A)$, we have $\chi_{100} > c$. Combining: $\chi_0 < c < \chi_{100}$. $\square$
\end{proof}

\subsection{Multi-technology PPS}\label{sec:proof-multi-tech}

\begin{proposition}
\label{prop:multi-tech-convex-hull}
Under our assumptions on $g$ (convex, homogeneous of degree 1), $X_{T,K} = \mathrm{conv}(X_A \cup \{y_1, \ldots, y_K\})$.
\end{proposition}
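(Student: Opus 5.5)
The plan is a direct double-inclusion argument that mirrors the proof of Proposition \ref{prop:convex-hull}, with the single technology weight $\lambda$ replaced by the vector $(\lambda_1,\ldots,\lambda_K)$. Throughout, $\mathrm{conv}(X_A\cup\{y_1,\ldots,y_K\})$ is read as in Definition \ref{def:multi-tech-pps} and Proposition \ref{prop:convex-hull}, that is, closed under free disposal: nonnegative vectors dominated componentwise by an element of the hull are included. With that reading, the claimed identity becomes an equality of two downward-closed sets. It then suffices to show that the set of maximal points $x_H+\sum_k\lambda_k y_k$ appearing in the definition of $X_{T,K}$ coincides with the convex hull itself.

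\emph{Step 1 ($\subseteq$).} Take $z\in X_{T,K}$ with witnesses $x_H$ and $\lambda_k\ge 0$, and set $\Lambda=\sum_k\lambda_k\le 1$. I split into two cases.
\begin{itemize}
\item If $\Lambda<1$, define $\tilde x=x_H/(1-\Lambda)$. Homogeneity of degree one of $g$ (Assumption \ref{ass:resource-use}) gives $g(\tilde x)=g(x_H)/(1-\Lambda)\le B$, so $\tilde x\in X_A$. Then
$$x_H+\sum_k\lambda_k y_k=(1-\Lambda)\tilde x+\sum_k\lambda_k y_k,$$
which is a convex combination with weights $(1-\Lambda,\lambda_1,\ldots,\lambda_K)$ summing to one.
\item If $\Lambda=1$, the constraint reads $g(x_H)\le 0$. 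In this case I take $\tilde x=0\in X_A$, since $g(0)=0$, and the same convex combination works. If $g$ vanishes only at $0$, then $x_H=0$ exactly. Even without that, we can dispose of $x_H$ and use $\tilde x=0$, because $x_H\le x_H+\sum_k\lambda_k y_k$ still dominates $z$ only through the hull point after adding back $x_H$. Cleanly, $z\le x_H+\sum_k\lambda_k y_k$ with $x_H\in X_A$ scaled by $0$ is handled by noting $g(x_H)\le 0$ forces $x_H=0$ under strict monotonicity. This is the point to double-check.
\end{itemize}
In either case, $z$ lies below a point of the hull, hence in its free-disposal closure.

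\emph{Step 2 ($\supseteq$).} Take $w=\mu_0x+\sum_k\mu_k y_k$ with $x\in X_A$, $\mu\ge 0$ and $\sum\mu=1$, and any $0\le z\le w$. Set $\lambda_k=\mu_k$ and $x_H=\mu_0x$. Then $g(x_H)=\mu_0g(x)\le\mu_0B=(1-\sum_k\lambda_k)B$, and $z\le x_H+\sum_k\lambda_k y_k$, so $z\in X_{T,K}$. Since a general element of a convex hull of a union of a convex set and finitely many points can always be written with a single point of $X_A$ (by convexity of $X_A$, any combination of several points of $X_A$ collapses to one), this covers the whole hull.

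\emph{Alternative route and main obstacle.} One could instead induct on $K$ using Proposition \ref{prop:convex-hull} and $\mathrm{conv}(\mathrm{conv}(S)\cup\{y\})=\mathrm{conv}(S\cup\{y\})$. However, Proposition \ref{prop:convex-hull} was proved for the autarky set defined by $g$, and the intermediate set $X_{T,K-1}$ is not of the form $\{g\le B\}$, so the direct argument is cleaner. The main obstacle is bookkeeping rather than substance, in two places:
\begin{itemize}
\item the degenerate case $\Lambda=1$, which needs $g(x)=0\Rightarrow x=0$ (following from $g$ increasing with $g(0)=0$), or else a disposal argument;
\item being explicit that the equality holds only with the free-disposal closure, exactly as in Proposition \ref{prop:convex-hull}.
\end{itemize}
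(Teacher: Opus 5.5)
Your proposal is correct and is essentially the paper's proof. It is the same double inclusion, with $\tilde x = x_H/(1-\Lambda)$ for $\subseteq$ and $\lambda_k=\mu_k$, $x_H=\mu_0 x$ for $\supseteq$. Your handling of the free-disposal reading, the case $\Lambda=1$, and the collapse of several $X_A$ points into one fills gaps the paper's version leaves implicit.

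Drop the ``disposal argument'' fallback for $\Lambda=1$, though, because the identity fails if $g$ can vanish at a nonzero $x_H$. Take $g(x)=x_1$, $B=1$, $y_1=(2,0)$: then $(2,M)\in X_{T,1}$ but lies under no hull point. So you need $g>0$ away from the origin, i.e.\ strict monotonicity, which the paper also uses implicitly in its existence proof.
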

\begin{proof}
*$X_{T,K} \subseteq \mathrm{conv}(X_A \cup \{y_1, \ldots, y_K\})$:* Take $z = x_H + \sum_k \lambda_k y_k \in X_{T,K}$ with $\Lambda = \sum_k \lambda_k$. For $\Lambda < 1$, define $\tilde x = x_H/(1-\Lambda)$. By homogeneity, $g(\tilde x) = g(x_H)/(1-\Lambda) \le B$, so $\tilde x \in X_A$. Then $z = (1-\Lambda)\tilde x + \sum_k \lambda_k y_k$ is a convex combination.

*$\mathrm{conv}(X_A \cup \{y_1, \ldots, y_K\}) \subseteq X_{T,K}$:* Take $z = \mu_0 x + \sum_k \mu_k y_k$ with $x \in X_A$. Set $\lambda_k = \mu_k$ and $x_H = \mu_0 x$. Then $g(x_H) = \mu_0 g(x) \le \mu_0 B = (1-\sum_k \lambda_k)B$. $\square$
\end{proof}

\begin{lemma}[Unique shadow prices]
\label{lem:shadow-prices-multi}
If $F$ is strictly concave and $C^1$, the optimum $x^*_K = \arg\max_{x \in X_{T,K}} F(x)$ is unique and $p^*_K = \nabla F(x^*_K)/\|\nabla F(x^*_K)\|$ is well-defined.
\end{lemma}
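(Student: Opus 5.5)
We establish existence first, then uniqueness, then well-definedness of the normalized gradient.

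\emph{Existence.} By Proposition \ref{prop:multi-tech-convex-hull}, $X_{T,K} = \mathrm{conv}(X_A \cup \{y_1,\ldots,y_K\})$, with free disposal handled as in Proposition \ref{prop:convex-hull}.
\begin{itemize}
\item $X_A$ is compact: it is closed and bounded, as in the proof of Proposition \ref{prop:existence}.
\item The convex hull of a compact set and finitely many points is compact. It is the image of the compact set $X_A \times \Delta^K$ under the continuous map $(x,\mu)\mapsto \mu_0 x + \sum_k \mu_k y_k$.
\item Intersecting the downward closure of this hull with $\mathbb R^N_+$ stays compact, since it lies inside a box bounded by the coordinatewise maxima.
\end{itemize}
$F$ is continuous, so Weierstrass gives a maximizer $x_K^*$.

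\emph{Uniqueness.} Suppose $x \neq x'$ are both maximizers with value $F^*$. Convexity of $X_{T,K}$ puts $\bar x = \tfrac12 x + \tfrac12 x'$ in the feasible set. Strict concavity gives $F(\bar x) > \tfrac12 F(x) + \tfrac12 F(x') = F^*$, contradicting optimality. This is the same argument as in the appendix proof of Proposition \ref{prop:optimal-tech-use}.

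\emph{Well-definedness of $p_K^*$.} $F$ is $C^1$, so $\nabla F(x_K^*)$ exists and is unique. The remaining task is to show it is nonzero. I expect this to be the main obstacle, because Assumption \ref{ass:production} guarantees $\nabla F \neq 0$ only on $\mathbb R^N_{++}$, while $x_K^*$ could sit on the boundary.

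\begin{itemize}
\item \emph{Interior case.} If $x_K^* \in \mathbb R^N_{++}$, Assumption \ref{ass:production}(iii) gives $\nabla F(x_K^*) \neq 0$ directly.
\item \emph{Boundary case.} First note that $X_{T,K} \supseteq X_A$ contains a strictly positive bundle with positive output, so $F(x_K^*) > 0$. Euler's theorem for the HD1 function $F$ then gives $\nabla F(x_K^*)'x_K^* = F(x_K^*) > 0$, which forces $\nabla F(x_K^*) \neq 0$.
\end{itemize}
If differentiability at the boundary is in doubt, I would read the statement's $C^1$ hypothesis as holding on a neighborhood of $x_K^*$. Euler's identity then follows by differentiating $F(\alpha x_K^*) = \alpha F(x_K^*)$ at $\alpha = 1$.

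With $\nabla F(x_K^*) \neq 0$, the normalization $p_K^* = \nabla F(x_K^*)/\|\nabla F(x_K^*)\|$ is well-defined and unique.
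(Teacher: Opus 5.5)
\textbf{Comparison with the paper's proof.} Your uniqueness step is the paper's argument: two distinct maximizers, their midpoint is feasible by convexity of $X_{T,K}$, and strict concavity gives a contradiction.

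You depart from the paper in two places. The paper's proof takes existence of $x_K^*$ for granted. It also attributes $\nabla F(x_K^*)\neq 0$ simply to $F$ being $C^1$, which does not follow: $F(x)=M-\|x-c\|^2$ with $c$ inside $X_{T,K}$ is strictly concave and $C^1$, yet its gradient vanishes at the maximizer. You instead prove existence by compactness of the free-disposal hull. You get the nonvanishing gradient from Assumption \ref{ass:production}(iii) in the interior, and from Euler's identity $\nabla F(x_K^*)'x_K^*=F(x_K^*)>0$ on the boundary. This gives a complete proof, and it makes explicit that the lemma needs the standing assumptions (nonnegativity, homogeneity of degree one, nonvanishing gradient on $\mathbb R^N_{++}$), not just the two hypotheses in its statement.

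\textbf{Repairs.} First, homogeneity of degree one is incompatible with strict concavity: $F(\tfrac32 x)=\tfrac12F(x)+\tfrac12F(2x)$ for every $x$. As written, you use strict concavity for uniqueness and homogeneity for Euler, so the combined hypotheses are empty. Run the midpoint step with strict quasi-concavity instead, $F(\bar x)>\min\{F(x),F(x')\}=F^*$; that is all the step needs.

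Second, justify that a strictly positive bundle has positive output. A zero of $F\ge 0$ in $\mathbb R^N_{++}$ would be an interior minimum, forcing $\nabla F=0$, which contradicts Assumption \ref{ass:production}(iii).

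Third, closedness of the downward closure needs the usual subsequence-and-compactness argument; the box bound only gives boundedness.
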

\begin{proof}
Uniqueness of $x^*_K$ follows from strict concavity of $F$ over the convex set $X_{T,K}$. Continuous differentiability of $F$ ensures $\nabla F(x^*_K)$ exists and is nonzero, giving a unique $p^*_K$. $\square$
\end{proof}

\subsection{\texorpdfstring{All-in adoption of technology \(K+1\)}{All-in adoption of technology K+1}}\label{sec:proof-all-in-K}

\begin{proposition}
\label{prop:all-in-K+1-appendix}
Under Assumption \ref{ass:production}, the worker goes all-in on technology $K+1$ if
$$
p_{100,K+1}'y_{K+1} > \max\left\{ \max_{x \in X_A} p_{100,K+1}'x, \; p_{100,K+1}'y_1, \ldots, p_{100,K+1}'y_K \right\}.
$$
\end{proposition}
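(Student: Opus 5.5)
The plan is to show that the corner $\tilde x = y_{K+1}$ solves $\max_{z\in X_{T,K+1}} F(z)$ by checking the first-order (supporting-hyperplane) condition at $y_{K+1}$, and then to invoke uniqueness of the optimum (Lemma \ref{lem:shadow-prices-multi}, equivalently Proposition \ref{prop:optimal-tech-use}). Write $p = p_{100,K+1}$. Since $F$ is homogeneous of degree one, $\nabla F$ is homogeneous of degree zero, so $\nabla F(y_{K+1}) = \nabla F(B\chi_{K+1}t_{K+1}) = \nabla F(t_{K+1}) \propto p$. Hence $p$ is exactly the normal to the isoquant through $y_{K+1}$. (Assumption \ref{ass:production} only gives differentiability on $\mathbb R^N_{++}$, so I will assume $t_{K+1}\in\mathbb R^N_{++}$, or $F(t_{K+1})>0$ with a one-sided gradient.)

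First, I use Proposition \ref{prop:multi-tech-convex-hull} to write $X_{T,K+1}=\mathrm{conv}(X_A\cup\{y_1,\dots,y_{K+1}\})$, with free disposal. A linear functional with $p\ge 0$ is maximized over the convex hull at an extreme generator, and free disposal does not raise the value of $p'z$. Therefore
\[
\max_{z\in X_{T,K+1}} p'z=\max\Bigl\{\max_{x\in X_A}p'x,\;p'y_1,\dots,p'y_{K+1}\Bigr\}.
\]
By the hypothesis this maximum equals $p'y_{K+1}$, and it is attained uniquely among the generators. So the hyperplane $H=\{z:p'z=p'y_{K+1}\}$ supports $X_{T,K+1}$ at $y_{K+1}$.

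Second, strict quasi-concavity of $F$ gives the other half. The upper contour set $U=\{z:F(z)\ge F(y_{K+1})\}$ is convex with $y_{K+1}$ on its boundary, and $\nabla F(y_{K+1})\propto p$. Hence $U\subseteq\{z:p'z\ge p'y_{K+1}\}$, with equality only at points of $U$ collinear with or equal to $y_{K+1}$. Combining the two half-spaces, any $z\in X_{T,K+1}$ with $F(z)\ge F(y_{K+1})$ must lie on $H$. Strict convexity of $U$ then forces $z=y_{K+1}$. Thus $y_{K+1}$ is the unique maximizer, and it corresponds to $\lambda_{K+1}=1$ with all other $\lambda_k=0$: the worker goes all in. The decomposition is unique because the strict inequality rules out writing $y_{K+1}$ as a nontrivial convex combination of other generators.

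The main obstacle is this second step, the tangency argument at a point that may sit on the boundary of $\mathbb R^N_+$ and where $F$ need only be quasi-concave. I would handle it with the standard fact that, for a differentiable quasi-concave $F$, $F(z)\ge F(y)$ implies $\nabla F(y)'(z-y)\ge 0$. Strictness then comes from strict quasi-concavity: if $p'(z-y)=0$ with $z\ne y$ and $F(z)\ge F(y)$, the midpoint would have $F>F(y)$ while still lying on $H$, and a small move back toward the origin would then violate the gradient inequality. As a consistency check, I would note that the condition reduces to the two inequalities in Proposition \ref{prop:inductive-adoption}, part 3, once $p'y_k=B\chi_k p't_k$ and $\max_{X_{T,K}}p'z=B\varrho_K(p)$ are substituted.
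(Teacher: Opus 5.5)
Your proposal is correct and follows the same route as the paper: verify the supporting-hyperplane (first-order) condition at $y_{K+1}$ with normal $\nabla F(y_{K+1})\propto p_{100,K+1}$, reduce $\max_{z\in X_{T,K+1}} p_{100,K+1}'z$ to the generators of the convex hull, and use the strict inequality to get uniqueness. Your version is more complete than the paper's sketch, because you justify $\nabla F(y_{K+1})\propto\nabla F(t_{K+1})$ via degree-zero homogeneity of $\nabla F$ and derive sufficiency of the tangency condition from strict quasi-concavity, which the paper only asserts.
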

\begin{proof}
At $x^* = y_{K+1}$, the first-order condition requires $\nabla F(y_{K+1})'(x - y_{K+1}) \le 0$ for all $x \in X_{T,K+1}$. Normalizing and using the fact that any $x \in X_{T,K+1}$ is a convex combination of points in $X_A$ and technology points $\{y_1, \ldots, y_{K+1}\}$, the condition reduces to: $p_{100,K+1}'y_{K+1}$ exceeds $p_{100,K+1}'x$ for all extreme points $x$ of $X_{T,K+1}$. With strict inequality, $y_{K+1}$ is the unique optimum. $\square$
\end{proof}

\subsection{CES-CET closed forms}\label{sec:ces-cet-derivations}

The closed-form expressions stated in Section \ref{sec:model} are derived here.

\textbf{CES production function.} The function \(F(x;\theta,\sigma) = \left(\sum_i \theta_i^{1/\sigma} x_i^{(\sigma-1)/\sigma}\right)^{\sigma/(\sigma-1)}\) satisfies constant returns to scale, concavity for \(\sigma > 0\), and \(C^\infty\) differentiability on \(\mathbb R^N_{++}\).

\textbf{CET resource use function.} The function \(g(x;s,\gamma) = \left(\sum_i s_i^{-1/\gamma} x_i^{(\gamma+1)/\gamma}\right)^{\gamma/(\gamma+1)}\) is convex, homogeneous of degree 1, and increasing. These properties follow from the convexity of \(x^{(\gamma+1)/\gamma}\) for \(\gamma > 0\) and the composition rules for convex functions.

\textbf{Autarky allocation.} Setting up the Lagrangian \(\mathcal L = F(x) - \mu(g(x) - B)\) and taking the ratio of first-order conditions for tasks \(i\) and \(j\): \[
\frac{x_i}{x_j} = \left(\frac{\theta_i}{\theta_j}\right)^{\gamma/(\gamma+\sigma)} \left(\frac{s_i}{s_j}\right)^{\sigma/(\gamma+\sigma)}.
\] Substituting into the constraint \(g(x) = B\) yields the normalization \(\Phi = \sum_j s_j^{(\sigma-1)/(\gamma+\sigma)} \theta_j^{(\gamma+1)/(\gamma+\sigma)}\), giving the closed-form allocation in the text.

\textbf{Productivity index.} By constant returns to scale, \(F(x^A)/B = F(x^A/B)\). Substituting the allocation formula and simplifying yields \(F(x^A)/B = \Phi^{\sigma/(\sigma-1)}\).

\textbf{Shadow prices.} From the gradient of \(g\) at \(x^A\): \(p_i^A \propto s_i^{-1/\gamma}(x_i^A)^{1/\gamma}\). Substituting the allocation formula and simplifying: \(p_i^A \propto (\theta_i/s_i)^{1/(\gamma+\sigma)}\), normalized to \(\|p^A\|_2 = 1\).

\textbf{Time shares.} From the allocation formula, \(\omega_i^A = x_i^A / \sum_j x_j^A \propto s_i^{\sigma/(\gamma+\sigma)} \theta_i^{\gamma/(\gamma+\sigma)}\). Logarithmic differentiation gives \(\partial \omega_i^A/\partial s_i = [\sigma/((\gamma+\sigma)s_i)] \omega_i^A(1-\omega_i^A) > 0\).

\textbf{Unit revenue.} The unit revenue function for CET is \(\varrho(p) = (\sum_i s_i\, p_i^{\gamma+1})^{1/(\gamma+1)}\) (the dual \(L^{q'}\) norm with \(q' = \gamma+1\)).

\textbf{Entry threshold.} From Definition \ref{def:adoption-threshold}, \(\chi_0(t) = \underline{\chi}(t, p_A) = \varrho(p_A)/(p_A't)\). Substituting \(p_i^A = \xi_i/\|\xi\|\) with \(\xi_i = (\theta_i/s_i)^{1/(\gamma+\sigma)}\), the \(\|\xi\|\) terms cancel, giving \(\chi_0(t) = \Phi^{1/(\gamma+1)}/\sum_i \xi_i t_i\) where \(\Phi = \sum_j s_j^{(\sigma-1)/(\gamma+\sigma)} \theta_j^{(\gamma+1)/(\gamma+\sigma)}\).

\textbf{Corner threshold.} From Definition \ref{def:adoption-threshold}, \(\chi_{100}(t) = \underline{\chi}(t, p_{100})\) where \(p_{100} = \nabla F(t)/\|\nabla F(t)\|\). Under CES, \(\nabla_i F(x) \propto \theta_i^{1/\sigma} x_i^{-1/\sigma}\), so \(p_{100,i} \propto \theta_i^{1/\sigma} t_i^{-1/\sigma}\). Substituting into \(\varrho(p_{100})\) and \(p_{100}'t\): \[
\chi_{100}(t) = \frac{\left(\sum_i s_i\, \theta_i^{(\gamma+1)/\sigma}\, t_i^{-(\gamma+1)/\sigma}\right)^{1/(\gamma+1)}}{\sum_i \theta_i^{1/\sigma}\, t_i^{1-1/\sigma}}.
\] The numerator is the worker's unit revenue at technology-induced prices; the denominator is the technology's value at those prices. Skills \(s\) enter only the numerator: a more skilled worker has higher \(\chi_{100}\) (harder to push to all-in).

\textbf{Comparative statics.} Both \(\gamma\) and \(\sigma\) enter \(\chi_0\) through \(\xi_i = (\theta_i/s_i)^{1/(\gamma+\sigma)}\) and through the outer exponent \(1/(\gamma+1)\). Higher \(\gamma + \sigma\) compresses \(\xi_i\) toward uniformity, making autarky prices less dispersed. The net effect on \(\chi_0\) depends on the technology direction: technologies aimed at the worker's weak tasks (high \(\theta_i/s_i\)) see the denominator \(\sum_i \xi_i t_i\) fall, potentially raising \(\chi_0\).

\subsection{\texorpdfstring{Proof of Corollary \ref{cor:curvature-adoption}}{Proof of Corollary }}\label{sec:proof-curvature-adoption}

\textbf{(1) Cone width increases with \(\gamma + \sigma\).}

The cone of adopted directions at capability \(\chi\) is \(\mathcal C(\chi) = \{t \in \mathbb S^{N-1}_+ : p_A't > \varrho(p_A)/\chi\}\). The width of this cone depends on how concentrated \(p_A\) is: a more uniform \(p_A\) is positively aligned with more directions \(t\).

Under CES-CET, \(p_i^A \propto (\theta_i/s_i)^{1/(\gamma+\sigma)}\). Define \(\alpha = 1/(\gamma+\sigma)\). As \(\gamma + \sigma \to \infty\), \(\alpha \to 0\), so \((\theta_i/s_i)^\alpha \to 1\) for all \(i\). Hence \(p^A \to (1/\sqrt{N}, \ldots, 1/\sqrt{N})\)---the uniform direction.

A uniform price vector satisfies \(p_A't = 1/\sqrt{N} \cdot \sum_i t_i > 0\) for any \(t \in \mathbb S^{N-1}_+\) with at least one positive component. Hence the cone converges to the entire positive orthant (a hemisphere in \(\mathbb S^{N-1}\)).

Conversely, as \(\gamma + \sigma \to 0\), \(\alpha \to \infty\), and \(p^A\) concentrates on the task \(i^* = \arg\max_i \theta_i/s_i\). Only technologies with large \(t_{i^*}\) are adopted, making the cone very narrow.

\textbf{(2) Partial-adoption region narrows with \(\gamma + \sigma\).}

By Lemma \ref{lem:threshold-ordering}, \(\chi_{100}/\chi_0 > 1\) when \(t \not\propto x_A\). The ratio measures how much the supporting hyperplane rotates from \(x_A\) to \(t\): \[
\frac{\chi_{100}}{\chi_0} = \frac{\varrho(p_{100})/(p_{100}'t)}{\varrho(p_A)/(p_A't)}.
\] As \(\gamma + \sigma \to \infty\), both \(p_A\) and \(p_{100}\) converge to uniformity (the exponents \(1/(\gamma+\sigma)\) and \(1/\sigma\) in their definitions both shrink). When \(p_A \approx p_{100}\), the thresholds \(\chi_0\) and \(\chi_{100}\) nearly coincide, so \(\chi_{100}/\chi_0 \to 1\).

In the rigid limit (\(\gamma + \sigma\) small), \(p_A\) and \(p_{100}\) point in very different directions (one concentrated on the worker's bottleneck, the other on the technology's strength), widening the partial-adoption region.

\textbf{Individual parameter effects.} Holding other parameters fixed: (i) \(\gamma \to 0\) (rigid specialist) allows a wide partial-adoption region; \(\gamma \to \infty\) compresses it. (ii) \(\sigma \to 0\) (Leontief production) allows a wide region; \(\sigma \to \infty\) compresses it. Both effects work through the concentration of the price vectors \(p_A\) and \(p_{100}\). See Derivation M58 for the connection to the isoquant-scaling proof. \(\square\)

\end{document}